\newtheorem{proposition}{Proposition}
\newtheorem{definition}{Definition}
\newtheorem{lemma}{Lemma}
\newtheorem{assumption}{Assumption}
\newtheorem{theorem}{Theorem}
\newtheorem{corollary}{Corollary}
\theoremstyle{definition}
\newtheorem{remark}{Remark}
\newtheorem{example}{Example}
\newcommand{\SEZS}[1]{\textcolor{violet}{#1}}
\newcommand{\set}[1]{{\{ #1 \}}}
\newcommand{\Sys}[1]{\ifstrempty{#1}{S}{S_{#1}}}
\newcommand{\real}[1]{\ifstrempty{#1}{\mathbb{R}}{\mathbb{R}^{#1}}}
\newcommand{\posreal}[1]{\real{#1}_{>0}}
\newcommand{\poszreal}{{\real{}_{\geq 0}}}
\newcommand{\delimit}{\hspace{1.5mm}}
\newcommand{\norm}[1]{\|\! ~#1~\!\|}
\newcommand{\dist}[2]{d(#1,#2)}
\newcommand{\DiscSpace}[2]{[#1]_{#2}}
\newcommand{\ConjFunc}[2]{#1~\circ ~#2}
\newcommand{\KInf}{\mathcal{K}_\infty}
\newcommand{\dISS}{\delta\text{-ISS}}
\newcommand{\kifunc}{\mathcal{K}_{\infty}}
\newcommand{\st}{s.t.\xspace}
\newcommand{\transpose}[1]{[#1]^T}
\newcommand{\defeq}{:=}
\newcommand{\DQnt}[1]{\ifstrempty{#1}{\tilde{\varepsilon}}{\tilde{\varepsilon}_{#1}}}
\newcommand{\vmetric}{\mathbf{e}}
\newcommand{\Def}{Def.\xspace}
\newcommand{\Thm}{Thm.\xspace}
\newcommand{\Lem}{Lemma\xspace}
\newcommand{\Sec}{Sec.\xspace}
\newcommand{\Page}{p.\xspace}
\newcommand{\Eqn}{Eqn.\xspace}
\newcommand{\Ass}{Assump.\xspace}
\newcommand{\Inq}{Inq.\xspace}
\newcommand{\Fig}{Fig.\xspace}
\newcommand{\Prop}{Prop.\xspace}
\newcommand{\Cor}{Cor.\xspace}
\newcommand{\partderiv}[2]{\dfrac{\partial #1}{\partial #2}}
\newcommand{\Lypv}[1]{\ifstrempty{#1}{V}{V_{#1}}}
\newcommand{\gammafunc}[1]{\gamma_{#1}}
\newcommand{\alphalow}[1]{\ifstrempty{#1}{\underline{\alpha}}{\underline{\alpha}_{#1}}}
\newcommand{\alphahigh}[1]{\ifstrempty{#1}{\overline{\alpha}}{\overline{\alpha}_{#1}}}
\newcommand{\lambdaL}[1]{\ifstrempty{#1}{\lambda}{\lambda_{#1}}}
\newcommand{\sigmad}[1]{\ifstrempty{#1}{\sigma_{d}}{\sigma_{d,#1}}}
\newcommand{\sigmau}[1]{\ifstrempty{#1}{\sigma_{u}}{\sigma_{u,#1}}}
\newcommand{\eqclass}[1]{\mathit{R}_{\varepsilon_{#1}\tilde{\varepsilon}_{#1}}}
\newcommand{\paraeps}[1]{(\varepsilon_{#1},\tilde{\varepsilon}_{#1})}
\newcommand{\StateSpace}[1]{\ifstrempty{#1}{\mathbb{R}^n}{\mathbb{R}^{n_{#1}}}}
\newcommand{\IPECurve}[1]{\ifstrempty{#1}{\mathcal{U}}{\mathcal{U}_{#1}}}
\newcommand{\IPICurve}[1]{\ifstrempty{#1}{\mathcal{W}}{\mathcal{W}_{#1}}}
\newcommand{\TransFunc}[1]{\ifstrempty{#1}{f}{f_{#1}}}
\newcommand{\ContSysSymb}[1]{\ifstrempty{#1}{\Sigma}{\Sigma_{#1}}}
\newcommand{\Wt}{\widetilde{W}}
\newcommand{\wt}{\tilde{w}}
\newcommand{\IntCon}{\mathcal{I}}
\newcommand{\nbr}{\mathcal{N}}
\newcommand{\State}[1]{\ifstrempty{#1}{X}{X_{#1}}}
\newcommand{\state}[1]{\ifstrempty{#1}{x}{x_{#1}}}
\newcommand{\IPE}[1]{\ifstrempty{#1}{U}{U_{#1}}}
\newcommand{\ipe}[1]{\ifstrempty{#1}{u}{u_{#1}}}
\newcommand{\IPI}[1]{\ifstrempty{#1}{W}{W_{#1}}}
\newcommand{\ipi}[1]{\ifstrempty{#1}{w}{w_{#1}}}
\newcommand{\TimeQnt}[1]{\ifstrempty{#1}{\tau}{\tau_{#1}}}
\newcommand{\ContSysDT}[1]{\mathcal{P}_{\TimeQnt{}}(#1)}
\newcommand{\StateDT}[1]{\ifstrempty{#1}{X_{\TimeQnt{}}}{X_{#1,\TimeQnt{}}}}
\newcommand{\IPECurveDT}[1]{\ifstrempty{#1}{\mathcal{U}_\TimeQnt{}}{\mathcal{U}_{#1,\TimeQnt{}}}}
\newcommand{\IPICurveDT}[1]{\ifstrempty{#1}{\mathcal{W}_\TimeQnt{}}{\mathcal{W}_{#1,\TimeQnt{}}}}
\newcommand{\TransFuncDT}[2]{\ifstrempty{#1}{\ifstrempty{#2}{\xrightarrow[\TimeQnt{}]{}}{\xrightarrow[\TimeQnt{}]{#2}}}{\ifstrempty{#2}{\xrightarrow[\TimeQnt{}]{}}{\xrightarrow[\TimeQnt{}]{#2}}}}
\newcommand{\StateTraj}[2]{\ifstrempty{#1}{\ifstrempty{#2}{\xi}{\xi_{#2}}}{\ifstrempty{#2}{\xi_{#1}}{\xi_{#1,#2}}}} 
\newcommand{\StateTrajDot}[1]{\ifstrempty{#1}{\dot{\xi}}{\dot{\xi}_{#1}}}
\newcommand{\IPETraj}[1]{\ifstrempty{#1}{\mu}{\mu_{#1}}}
\newcommand{\IPITraj}[1]{\ifstrempty{#1}{\nu}{\nu_{#1}}}
\newcommand{\StateQnt}[1]{\ifstrempty{#1}{\eta}{\eta_{#1}}}
\newcommand{\IPQnt}[1]{\ifstrempty{#1}{\omega}{\omega_{#1}}}
\newcommand{\ContSysA}[2]{\ifstrempty{#1}{\mathcal{P}_{\AllQnt{}}(#2)}{\mathcal{P}_{\AllQnt{#1}}(#2)}}
\newcommand{\ContSysAi}[1]{\mathcal{P}_{\AllQnt{i}}(#1)}
\newcommand{\ContSysAc}[2]{\ifstrempty{#1}{\mathcal{P}^c_{\AllQnt{}}(#2)}{\mathcal{P}^c_{\AllQnt{#1}}(#2)}}
\newcommand{\AllQnt}[1]{\ifstrempty{#1}{{\TimeQnt{}\StateQnt{}\IPQnt{}}}{{\TimeQnt{}\StateQnt{#1}\IPQnt{#1}}}}
\newcommand{\StateA}[1]{\ifstrempty{#1}{\State{\AllQnt{}}}{\State{#1,\AllQnt{#1}}}}
\newcommand{\IPECurveA}[1]{\ifstrempty{#1}{\IPECurve{\AllQnt{}}}{\IPECurve{#1,\AllQnt{#1}}}}
\newcommand{\IPICurveA}[1]{\ifstrempty{#1}{\IPICurve{\AllQnt{}}}{\IPICurve{#1,\AllQnt{#1}}}}
\newcommand{\deltaA}[1]{\ifstrempty{#1}{\delta_{\AllQnt{}}}{\delta_{#1,\AllQnt{#1}}}}
\newcommand{\TransFuncA}[2]{\ifstrempty{#1}{\ifstrempty{#2}{\xrightarrow[\AllQnt{}]{}}{\xrightarrow[\AllQnt{}]{#2}}}{\ifstrempty{#2}{\xrightarrow[#1,\AllQnt{}]{}}{\xrightarrow[#1,\AllQnt{}]{#2}}}}
\newcommand{\AllQntb}[1]{\ifstrempty{#1}{\TimeQnt{}\StateQnt{}\IPQnt{}}{\TimeQnt{}\StateQnt{#1}\IPQnt{#1}}}
\newcommand{\StateAb}[1]{\ifstrempty{#1}{\State{\AllQntb{}}}{\State{#1,\AllQntb{#1}}}}
\begin{document}
\title{Compositional Abstraction-Based Controller Synthesis for Continuous-Time Systems}

\author{Kaushik Mallik, Anne-Kathrin Schmuck, Sadegh Soudjani, Rupak Majumdar
\thanks{All authors are with MPI-SWS, Kaiserslautern, Germany. {\tt\small \{kmallik,akschmuck,sadegh,rupak\}@mpi-sws.org}}
}

\maketitle

\begin{abstract}
Controller synthesis techniques for continuous systems with respect to temporal logic specifications typically use a finite-state symbolic abstraction of the system model. 
Constructing this abstraction for the entire system is computationally expensive, and does not
exploit natural decompositions of many systems into interacting components. 
We describe a methodology for compositional symbolic abstraction 
to help scale controller synthesis for temporal logic to larger systems.

We introduce a new relation, called (approximate)
\emph{disturbance bisimulation}, as the basis for compositional symbolic abstractions.
Disturbance bisimulation strengthens the standard approximate alternating bisimulation relation
used in control.
It extends naturally to systems which are composed of weakly interconnected sub-components possibly connected 
in feedback, and models the coupling signals as disturbances.
After proving this composability of disturbance bisimulation for metric systems we apply this result to the compositional abstraction of networks of input-to-state stable deterministic non-linear control systems. 
We give conditions that allow to construct finite-state abstractions compositionally for each component in such a network, so that 
the abstractions are simultaneously disturbance bisimilar to their continuous counterparts. Combining these two results, we show conditions under which one can compositionally abstract a network of non-linear control systems in a modular way while ensuring that the final composed abstraction is disturbance bisimilar to the original system.

We discuss how we get a compositional abstraction-based controller synthesis
methodology for networks of such systems against 
local temporal specifications as a by-product of our construction. 
\end{abstract}

%

\section{Introduction}
\label{sec:intro}

Symbolic models for continuous dynamical systems enable powerful automata-theoretic techniques for controller design
for $\omega$-regular specifications to be applied to continuous systems.
In this methodology, one starts with a continuous dynamical system and an approximation factor $\varepsilon$, 
and constructs a finite-state abstraction whose trajectories
are guaranteed to be within a distance of $\varepsilon$ to the original system and vice versa
\cite{TabuadaBook,girard1,PolaGT08,pola2009symbolic,ZamaniPolaMazoTabuada_2012}.
The approximation is usually formalized using 
$\varepsilon$-approximate alternating bisimulation relations, 
which has the property that a controller synthesized for the abstraction can be automatically refined into controller for the original system. 
Under the assumption of incremental input-to-state stability, one can algorithmically construct a finite-state discrete system which is $\varepsilon$-approximately alternatingly bisimilar to the original continuous system.
Since one can also algorithmically synthesize controllers for $\omega$-regular properties for 
discrete systems (see, e.g., \cite{EmersonJutla91,MPS95}), 
this provides an automatic controller synthesis technique for continuous systems.
The methodology is integrated into controller synthesis tools \cite{Pessoa,Scots}, and
has been recently applied to large case studies in adaptive cruise control \cite{NilssonHBCAGOPT16} and bipedal robots \cite{AmesTSMKRG15}.
It has also been extended to systems with disturbances \cite{pola2009symbolic,BorriPB12} or to stochastic systems \cite{ZamaniEMAL14,ZamaniAG15,ZamaniRE15}.

The computational bottleneck of this approach is the expensive abstraction step (typically exponential in the dimension) which limits its applicability to real systems. However, in practice, many systems are designed using interacting networks of smaller dynamically coupled
components. 
One would imagine that each component can be abstracted separately, by modeling the states of the neighboring components influencing its dynamics as disturbance signals. 

Performing controller synthesis on these separate component abstractions locally, results in a decentralized control architecture where each component is connected to its individual controller and controllers of different components do not communicate.
Such a decentralized control architecture must treat neighboring components as adversaries. Thus locally synthesized controllers have to be able to counteract all possible disturbances coming from neighboring components. Therefore, as well known in classical control theory, this architecture only results in satisfying controller performance if couplings between dynamics of the interconnected components are small (See e.g. \cite[Chap.~21]{goodwin2001control}). 

In this paper we show how decentralized controllers for a network of weakly coupled \emph{nonlinear continuous-time} dynamical systems can be synthesized via the abstract controller synthesis paradigm discussed before. The main ingredient of our approach is a compositional abstraction technique that allows us to apply the standard controller synthesis for each local abstraction.

Compositional abstractions for networked components are challenging due to the following observation. If we apply the usual approach to construct finite-state abstractions (using $\varepsilon$-approximate alternating bisimulation relations as in \cite{PolaGT08,pola2009symbolic,ZamaniPolaMazoTabuada_2012}) to an individual component in the network, 
its abstraction is defined over a discretized version of the component's state space. By treating state trajectories of neighboring components as disturbance inputs, discretizing the state space of one component also discretizes (parts of) the disturbance space of it's neighboring components. This gives rise to a mismatch of the disturbance signals of each component and it's abstraction (as the former is continuous while the latter is piecewise constant). This mismatch is bounded by the abstraction parameter $\varepsilon_i$ but only at sampling time instances. 
We therefore need to reason about the similarity of two systems (the component and the abstraction in this case) whose disturbance trajectories are different and whose mismatch might increase during inter-sampling periods. 

To deal with this challenge we introduce a new binary relation, called \emph{disturbance bisimulation} with two approximation parameters 
$\paraeps{}$ and provide conditions for the class of nonlinear continuous-time control systems that bound the error during inter-sampling periods to allow for the construction of disturbance bisimilar abstractions. 

\subsection*{Outline and Contributions}
This paper consists of three parts: Part I (Sec.~\ref{sec:MetricSystems}-\ref{sec:DisturbanceBisimulation}),  Part II (Sec.~\ref{sec:ControlSystems}-\ref{sec:composition}), 
and Part III (\Sec~\ref{sec:control}-\ref{sec:Example}).

\emph{Part I} focuses on \emph{metric systems} as defined in Sec.~\ref{sec:MetricSystems} and introduces disturbance bisimulation for this system class in Sec.~\ref{sec:DisturbanceBisimulation}. As our first contribution, we show that disturbance bisimulation naturally extends to networks of metric systems.

\emph{Part II} applies the compositional abstraction result for metric systems from Part I to the class of \emph{input-to-state stable deterministic non-linear control systems}, defined in \Sec~\ref{sec:ControlSystems}. First, we focus on a single control system $\Sigma$ in \Sec~\ref{sec:MonolithicAbstraction} which has the additional property that the growth-rate of its disturbance is bounded during the inter-sampling period. As our second contribution we show how to construct a finite state symbolic abstraction $\hat{\Sigma}$ (which is a metric system) of $\Sigma$ s.t. $\hat{\Sigma}$ is disturbance bisimilar to the sampled time model (which is again a metric system) of $\Sigma$. As our third contribution we given conditions under which this result can be combined with the one from Part I to provide a compositional abstraction method for networks of control systems in \Sec~\ref{sec:composition}. Intuitively, the obtained conditions limit the allowed coupling between neighboring subsystems and link the 
abstraction parameter of the state space of one component with the parameters bounding the disturbance mismatch of its neighboring components.

\emph{Part III} discusses a decentralized methodology for controller synthesis in networked systems based on disturbance bisimulations (\Sec~\ref{sec:control}).
To show the strength of our approach, we apply our decentralized abstraction-based controller synthesis method to 
a system consisting of 200 components and a total of 400 state variables in \Sec~\ref{sec:Example}.

\subsection*{Related Work}
%
Conceptually the closest related works are \cite{tazaki2008bisimilar,rungger2015compositional,zamani2016compositional}.
In \cite{tazaki2008bisimilar}, the authors presented a compositional approach for finite state abstractions of a network of control systems. Their interconnection-compatible approximate bisimulation is similar to our disturbance bisimulation. However, their approach is only applicable to \emph{discrete-time linear} systems. 
In \cite{rungger2015compositional}, the authors presented a compositional approach to construct approximate abstractions which perform a model order reduction
from one continuous system to another \emph{continuous} system with fewer state variables. 
In \cite{zamani2016compositional}, a similar approach as ours was presented for solving a \emph{continuous} compositional abstraction synthesis problem using 
ideas from dissipativity theory; their joint storage functions use the same quantifier alternation as our disturbance bisimulation. 

Pola et al.~\cite{PolaPB14,PolaPBTAC16} proposed a compositional abstraction technique for networked continuous systems
based on approximate bisimulation. Unfortunately, the use of bisimulation introduces the 
unrealistic assumption that components are free to choose the state trajectories of their neighboring components (recall that
a bisimulation relation is allowed to \emph{pick} a suitable matching trajectory).
This is not realistic in a compositional setting, in which one component does not control the trajectory of other components
in the system. 

Dallal et al.~\cite{DallalTabuada2015} proposed a compositional controller synthesis algorithm for discrete-time systems based on a small-gain-theorem 
and assume-guarantee techniques. Here, state variables of neighboring components are over-approximated by sets, and local abstractions are computed under this additional source for non-determinism. This provides a different way to incorporate disturbances caused by neighboring components into the abstraction of a local components. In contrast to our work only discrete-time systems and persistence specifications are treated.

Most works on abstraction based controller synthesis only give guarantees on the closeness of trajectories at sampling instances or discuss only the abstraction of discrete time systems. Notable exceptions are \cite{LiuOzay_2014} and \cite{Girard2016_InterSampling}, where the robustness-margins introduced in \cite{LiuOzay_2014} have a similar effect as the growth bound introduced in our work.


\section{Metric Systems}\label{sec:MetricSystems}

This section introduces metric systems and networks of such systems as the underlying system models used in this paper.



\subsection{Preliminaries}

We use the symbols $\mathbb{N}$, $\real{}$, $\posreal{}$, $\poszreal$ and $\mathbb{Z}$ to denote the set of natural, 
real, positive real, nonnegative real numbers and integers, respectively. 
The symbols $I_n$, $0_n$, and $0_{n\times{m}}$ denote the identity matrix, the zero vector and the zero matrix 
in $\real{n\times{n}}$, $\real{n}$, and $\real{n\times{m}}$, respectively. Given a vector \mbox{$x\in\real{n}$}, we denote by $x_{i}$ the $i$--th element of $x$ and
by $\norm{x}$ the infinity norm of $x$.

Given a time sampling parameter $\tau\in \posreal{}$, a \emph{metric system}\footnote{
Often, metric systems are defined with an additional output space and an output map from states to the output space.
We omit the output space for notational simplicity; for us, the state
and the output space coincide, and the output map is the identity function.}
$S=(X,U,\mathcal{U}_\tau,W,\mathcal{W}_\tau,\delta_\tau)$
consists of
a (possibly infinite) set of states $X\subseteq \real{n}$ equipped with a metric $d:X\times X \rightarrow \poszreal$, 
a set of piecewise constant inputs $\mathcal{U}_\tau$ of duration $\tau$ taking values in $U\subseteq \real{m}$, i.e.,
\begin{subequations}\label{equ:def:mathcalUW}
 \begin{align}\label{equ:def:mathcalU}
	\mathcal{U}_\tau & = \set{
		 \mu:[0,\TimeQnt{}]\rightarrow U \mid
		 \forall t_1,t_2\in[0,\TimeQnt{}]~.~\mu(t_1) = \mu(t_2)},
\end{align}
a set of disturbances $\mathcal{W}_\tau$ taking values in $W\subseteq \real{p}$, i.e.,
\begin{align}\label{equ:def:mathcalW}
	\mathcal{W}_\tau & \subseteq \set{
		\nu:[0,\TimeQnt{}]\rightarrow W},
\end{align}
and a transition function $\delta_\tau: X\times \mathcal{U}_\tau \times \mathcal{W}_\tau \rightarrow 2^{X}$. We write $x\xrightarrow[\tau]{\mu, \nu} x'$ when $x'\in\delta_\tau(x,\mu, \nu)$, and we denote the unique value of $\mu\in\mathcal{U}$ over $[0,\tau]$ by  $u_\mu\in U$.

\end{subequations}
%


If the metric system $S$ is undisturbed, we define $W=\set{0}$. In this case we occasionally represent $S$ by the tuple $S = (X, U, \mathcal{U}_\tau, \delta_\tau)$ and use $\delta_\tau:X\times \mathcal{U}_\tau\rightarrow 2^X$ with the understanding that $x'\in\delta_\tau(x,\mu, \nu)$ holds for the zero trajectory $\IPITraj{}:{\poszreal}\rightarrow \set{0}$ whenever $x'\in\delta_\tau(x,\mu)$.

By slightly abusing notation we write $x'=\delta_{\tau}(x,\mu, \nu)$ as a short form when the set $\delta_{\tau}(x,\mu, \nu) = \set{x'}$ is singleton.

If $X$, $\mathcal{U}_{\tau}$ and $\mathcal{W}_{\tau}$ are finite (resp. countable), $S$ is called \emph{finite} (resp. \emph{countable}). 
We also assign to a transition $x'=\delta_{\tau}(x,\mu, \nu)$  any continuous time evolution $\xi:[0,\tau]\rightarrow X$ \st $\xi(0) = x$ and $\xi(\tau) = x'$.


\subsection{Networks of Metric Systems}\label{sec:Comp:Metric}

First let us introduce some notation. 
%
%
Let $I$ be an index set (e.g., $I = \set{1,\ldots, N}$ for some natural number $N$) and let
$\IntCon\subseteq I\times I$ be a binary irreflexive \emph{connectivity relation} on $I$. 
Furthermore, let $I'\subseteq I$ be a subset of systems with $\IntCon' := (I'\times I')\cap \IntCon$.
For $i\in I$ we define 
$\nbr_{\mathcal{I}}(i) = \set{j\mid (j,i)\in \IntCon}$ and extend this notion to subsets of systems $I'\subseteq I$ as $\nbr_{\mathcal{I}}(I') = \set{j\mid \exists i\in I'.j\in \nbr_{\mathcal{I}\setminus\mathcal{I}'}(i)}$. 
Intuitively, a set of systems can be imagined to be the set of vertices $\set{1,2,\ldots,|I|}$ of a directed graph $\mathcal{G}$, and $\mathcal{I}$ to be the corresponding adjacency relation. Given any vertex $i$ of $\mathcal{G}$, the set of incoming (resp. outgoing) edges are the inputs (resp. outputs) of a subsystem $i$, and $\nbr_{\mathcal{I}}(i)$ is the set of neighboring vertices from which the incoming edges originate.

Let $S_i = (X_i, U_i, \mathcal{U}_{\tau,i}, W_i, \mathcal{W}_{\tau,i}, \delta_{\tau,i})$,
for $i\in I$, be a metric system with metric $d_i$. Then we say that $\set{S_i}_{i\in I}$ are \emph{compatible for composition} w.r.t.\ the interconnection relation $\mathcal{I}$, if 
for each $i\in I$, we have $W_i = \prod_{j\in \nbr_\mathcal{I}(i)} {X_j}$, i.e., the disturbance input space of $S_i$ is the same as the Cartesian product of the state spaces of all the neighbors in $\nbr_{\mathcal{I}}(i)$.
 %
 %
By slightly abusing notation we write $w_i=\prod_{j\in \nbr_\mathcal{I}(i)} \set{x_j}$ for $x_j\in X_j$ and $w_i\in W_i$ as a short form for the single element of the set $\prod_{j\in \nbr_\mathcal{I}(i)} \set{x_j}$. We extend this notation to all sets with a single element.
 
As $I'$ is a subset of all systems in the network, we divide the set of disturbances $W_i$ for any $i\in I'$ into the sets of coupling and external disturbances, defined by
$W_i^c = \prod_{j\in \nbr_{\mathcal{I}'}(i)} {X_j}$ and $W_i^e = \prod_{j\in\nbr_{\IntCon\setminus\IntCon'}(i)} {X_j}$, respectively. 

We extend the metrics $d_j$ on $X_j$, $j\in \nbr_{\mathcal{I}}(i)$, to the vector valued metric $\vmetric:W_i\times W_i\rightarrow\real{|\nbr_{\mathcal{I}}(i)|}_{\geq 0}$ on $W_i$ \st for any $w_i=\prod_{j\in \nbr_\mathcal{I}(i)} \set{x_j}\in W_i$ and $w_i'=\prod_{j\in \nbr_\mathcal{I}(i)} \set{x_j'}\in W_i$,
\begin{equation}\label{equ:defvmetric_composed}
 \vmetric(w_i,w_{i}'):=\hspace{-0.3cm}\prod_{j\in \nbr_{\mathcal{I}}(i)} \lbrace d_j(x_j,x_j') \rbrace.
\end{equation}
Intuitively, $\vmetric(w_i,w_i')$ is a vector with dimension $|\nbr_{\mathcal{I}}(i)|$, where the $j$-th entry measures the mismatch of the respective state vector of the $j$-th neighbor of $i$. 

If  $\set{S_i}_{i\in I}$ are compatible for composition, we define the \emph{composition} of any subset $I'\subseteq I$ of systems as the metric system $\llbracket S_i \rrbracket_{i\in I'}=(X, U, \mathcal{U}_{\tau}, W, \mathcal{W}_{\tau}, \delta)$ 
s.t.\
$X = \prod_{i\in I'} {X_i}$, 
$U = \prod_{i\in I'} {U_i}$, and 
$W = \prod_{j\in \nbr_{\mathcal{I}}(I')} {X_j}$, 
where $\mathcal{U}_{\tau}$ and $\mathcal{W}_{\tau}$ are defined over $U$ and $W$, respectively, as in \eqref{equ:def:mathcalUW}.
In analogy to \eqref{equ:defvmetric_composed} we equip the composed state space $X$ with the metric $d(x,x')=\norm{\prod_{j\in I'} \set{d_j(x_j,x_j')} }$.
The composed transition function is defined as 
$\delta_{\tau}(x,\mu,\nu) = \prod_{i\in I'} \{\delta_{\tau,i}(x_i, \mu_i, \nu_i^c\times\nu_i^e)\}$
where $x=\prod_{i\in I'} \set{x_i}$, $\mu=\prod_{i\in I'} \set{\mu_i}$, 
$\nu=\prod_{i\in I'} \set{\nu_i^e}$, and
$\nu_i^c=\prod_{j\in\nbr_{\mathcal{I}'}(i)} \set{\xi_j}$
with $\xi_j$ being the continuous time evolution of $x_j$.
It follows immediately from this construction that the composed system $\llbracket S_i \rrbracket_{i\in I'}$ is again a metric system, with metric $d$. 
We extend the metric $\vmetric$ to the set $W$ by substituting $\nbr_{\mathcal{I}}(i)$ by $\nbr_{\mathcal{I}}(I')$ in \eqref{equ:defvmetric_composed}.

Intuitively, 
the composition of a set of compatible metric systems gives the joint dynamics of the network.
When we pick a subset of systems $I'\subseteq I$, the incoming edges from $\nbr_{\mathcal{I}}(I')$
become external disturbances for the composed subsystem. 
Observe that our approach is modular. We can first compose different disjoint sets of subsystems before composing the resulting systems together. 
Our definition of system composition is illustrated by the following example.

\begin{example}\label{ex}
\begin{figure}
\centering
 \begin{tikzpicture}
	\newcommand{\LBox}{0.9};	
	\newcommand{\HBox}{0.9};	
	
	\newcommand{\LArr}{0.7};		

	\coordinate	(1SW)	at	(0,0);
	\coordinate	(2SW)	at	(3,0);
	\coordinate	(3SW)	at	(6,0);
	
	\draw	(1SW)	rectangle	($ (1SW) + (\LBox,\HBox) $)	node[pos=0.5]	{$S_1$};
	\draw	(2SW)	rectangle	($ (2SW) + (\LBox,\HBox) $)	node[pos=0.5]	{$S_2$};
	\draw	(3SW)	rectangle	($ (3SW) + (\LBox,\HBox) $)	node[pos=0.5]	{$S_3$};
	
	\draw[->]	($ (1SW) + (-\LArr,0.5*\HBox) $)	--	($ (1SW) + (0,0.5*\HBox) $)	node[pos=0.1,above]	{$\mu_1$};
	\draw[->]	($ (2SW) + (-\LArr,0.75*\HBox) $)	--	($ (2SW) + (0,0.75*\HBox) $)	node[pos=0.1,above]	{$\mu_2$};
	\draw[->]	($ (1SW) + (\LBox,0.5*\HBox) $)	--	($ (2SW) + (0,0.5*\HBox) $)	node[pos=0.2,above]	{$\xi_1$};
	\draw[->]	($ (3SW) + (-\LArr,0.75*\HBox) $)	--	($ (3SW) + (0,0.75*\HBox) $)	node[pos=0.1,above]	{$\mu_3$};
	\draw[->]	($ (3SW) + (\LBox,0.5*\HBox) $)	--	($ (3SW) + (\LBox+\LArr,0.5*\HBox) $)	node[pos=0.4,above]	{$\xi_3$}	--	($ (3SW) + (\LBox+\LArr,-0.5*\HBox) $)	--	($ (2SW) + (-\LArr,-0.5*\HBox) $)	--	($ (2SW) + (-\LArr,0.25*\HBox) $)	--	($ (2SW) + (0,0.25*\HBox) $);
	\draw[->]	($ (2SW) + (\LBox,0.25*\HBox) $)	--	($ (3SW) + (0,0.25*\HBox) $)	node[pos=0.4,above]	{$\xi_2$};
\end{tikzpicture}
 \caption{Network of metric systems containing cycles as discussed in Example~\ref{ex}.}\label{fig a}
 \end{figure}
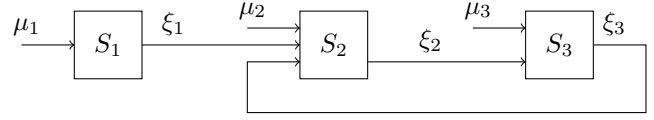
 \begin{figure}
\centering
 \begin{tikzpicture}
	\newcommand{\LBox}{1};	
	\newcommand{\HBox}{0.9};	
	
	\newcommand{\LArr}{1};		

	\coordinate	(12SW)	at	(3,0);
	\coordinate	(3SW)	at	(7,0);
	
	\draw	(12SW)	rectangle	($ (12SW) + (2*\LBox,2*\HBox) $)	node[pos=0.5]	{$\llbracket S_i \rrbracket_{i\in \set{1,2}}$};
	\draw	(3SW)	rectangle	($ (3SW) + (\LBox,\HBox) $)	node[pos=0.5]	{$S_3$};
	
	\draw[->]	($ (12SW) + (-\LArr,0.5*2*\HBox) $)	--	($ (12SW) + (0,0.5*2*\HBox) $)	node[pos=0.1,above]	{$\mu_2$};
	\draw[->]	($ (12SW) + (-\LArr,0.9*2*\HBox) $)	--	($ (12SW) + (0,0.9*2*\HBox) $)	node[pos=0.1,above]	{$\mu_1$};
	\draw[->]	($ (3SW) + (-\LArr,0.75*\HBox) $)	--	($ (3SW) + (0,0.75*\HBox) $)	node[pos=0.1,above]	{$\mu_3$};
	\draw[->]	($ (3SW) + (\LArr,0.5*\HBox) $)	--	($ (3SW) + (\LBox+\LArr,0.5*\HBox) $)	node[pos=0.4,above]	{$\xi_3$}	--	($ (3SW) + (\LBox+\LArr,-0.5*\HBox) $)	--	($ (12SW) + (-\LArr,-0.5*\HBox) $)	--	($ (12SW) + (-\LArr,0.1*2*\HBox) $)	--	($ (12SW) + (0,0.1*2*\HBox) $);
	\draw[->]	($ (12SW) + (2*\LBox,0.25*\HBox) $)	--	($ (3SW) + (0,0.25*\HBox) $)	node[pos=0.4,above]	{$\xi_2$};
\end{tikzpicture} 
 \caption{Composition of the subsystem $\set{S_i}_{i\in \set{1,2}}$ within the network depiced in Fig.~\ref{fig a}.}\label{fig b}
\end{figure}
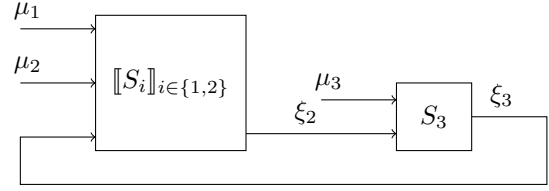
Consider the following three systems
\begin{align}
	S_i = (X_i,U_i,\mathcal{U}_{\tau,i},W_i,\mathcal{W}_{\tau,i},\delta_{\tau,i}) \qquad \text{for } i\in\set{1,2,3}.
\end{align}
The index set and the interconnection relation are given by $I=\set{1,2,3}$ and $\mathcal{I} = \set{(1,2),(2,3),(3,2)}$, respectively, and the sets of neighbors are defined by $\nbr_{\mathcal{I}}(1) = \emptyset$, $\nbr_{\mathcal{I}}(2) = \set{1,3}$ and $\nbr_{\mathcal{I}}(3) = \set{2}$. The systems $\set{\Sigma_i}_{i\in I}$ are compatible for composition w.r.t.\ $I$ if $W_1 = \{0\}$, $W_2 = X_1\times X_3$ and $W_3 = X_2$. In this case the schematic representation of this network of systems is given in Fig.~\ref{fig a}.

Now assume that $\set{S_i}_{i\in I}$ are compatible and consider the composition of system $S_1$ and $S_2$, i.e. $\llbracket S_i \rrbracket_{\set{1,2}}=(X,U,\mathcal{U}_\tau,W,\mathcal{W}_\tau,\delta_\tau)$. This composition has the interconnection relation $\mathcal{I}' = \set{(1,2)}$ and the global set of neighbors $\nbr_{\mathcal{I}}(I') = \set{3}$. The coupling and external disturbance spaces are given by
$W_1^c = \{0\}$, $W_1^e = \{0\}$,
$W_2^c = X_1$ and $W_2^e = X_3$. The remaining sets are given by $X = X_1\times X_2$, $U = U_1\times U_2$, and $W = X_3$. Given some $x = (x_1,x_2)\in X$, $\mu = (\mu_1,\mu_2)\in \mathcal{U}_\tau$ and $\nu = \xi_3 \in \mathcal{W}_\tau$ (the continuous time version of $x_3$), the transition relation is given by $\delta_\tau(x,\mu,\nu) = (\delta_{\tau,1}(x_1,\mu_1,\SEZS{0}),\delta_{\tau,2}(x_2,\mu_2,(\xi_1,\xi_3)))$. By substituting system $S_1$ and $S_2$ by its composition $\llbracket S_i \rrbracket_{i\in\set{1,2}}$ we obtain the network shown in Fig.~\ref{fig b}.

\end{example}




\section{Disturbance Bisimulation}\label{sec:DisturbanceBisimulation}
Before formally defining disturbance bisimulation, we want to motivate its need for \emph{compositional} abstraction-based controller synthesis.

\begin{figure*}
\centering
 \begin{tikzpicture}
	\newcommand{\LBox}{1};	
	\newcommand{\HBox}{1};	
	
	\newcommand{\LArr}{1};		
	\newcommand{\x}{0.1};		
	\newcommand{\y}{0.7};		

	\coordinate	(1SW)	at	(-1,0);
	\coordinate	(h1SW)	at	(-1,-2);
	\coordinate	(2SW)	at	(4.5,0);
	\coordinate	(h2SW)	at	(4.5,-2);
	\coordinate	(12SW)	at	(11,0);
	\coordinate	(h12SW)	at	(11,-2);
	
	\draw	(1SW)	rectangle	($ (1SW) + (\LBox,\HBox) $)	node[pos=0.5]	{$S_1$};
	\draw	(h1SW)	rectangle	($ (h1SW) + (\LBox,\HBox) $)	node[pos=0.5]	{$\hat{S}_1$};
	\draw	(2SW)	rectangle	($ (2SW) + (\LBox,\HBox) $)	node[pos=0.5]	{$S_2$};
	\draw	(h2SW)	rectangle	($ (h2SW) + (\LBox,\HBox) $)	node[pos=0.5]	{$\hat{S}_2$};
	\draw	(12SW)	rectangle	($ (12SW) + (2*\LBox,\HBox) $)	node[pos=0.5]	{$\llbracket S_i \rrbracket_{i\in\set{1,2}}$};
	\draw	(h12SW)	rectangle	($ (h12SW) + (2*\LBox,\HBox) $)	node[pos=0.5]	{$\llbracket \hat{S}_i \rrbracket_{i\in\set{1,2}}$};
	
	\draw[->]	($ (1SW) + (-\LArr,0.5*\HBox) $)	--	($ (1SW) + (0,0.5*\HBox) $)	node[pos=0.1,above]	{$u_1$};
	\draw[->]	($ (h1SW) + (-\LArr,0.5*\HBox) $)	--	($ (h1SW) + (0,0.5*\HBox) $)	node[pos=0.1,above]	{$\hat{u}_1$};
	\draw[->]	($ (2SW) + (-\LArr,0.8*\HBox) $)	--	($ (2SW) + (0,0.8*\HBox) $)	node[pos=0.1,above]	{$u_2$};
	\draw[->]	($ (h2SW) + (-\LArr,0.8*\HBox) $)	--	($ (h2SW) + (0,0.8*\HBox) $)	node[pos=0.1,above]	{$\hat{u}_2$};
	\draw[->]	($ (12SW) + (-\LArr,0.5*\HBox) $)	--	($ (12SW) + (0,0.5*\HBox) $)	node[pos=0.1,below]	{$u_2$};
	\draw[->]	($ (12SW) + (-\LArr,0.75*\HBox) $)	--	($ (12SW) + (0,0.75*\HBox) $)	node[pos=0.1,above]	{$u_1$};
	\draw[->]	($ (h12SW) + (-\LArr,0.25*\HBox) $)	--	($ (h12SW) + (0,0.25*\HBox) $)	node[pos=0.1,below]	{$\hat{u}_2$};
	\draw[->]	($ (h12SW) + (-\LArr,0.75*\HBox) $)	--	($ (h12SW) + (0,0.75*\HBox) $)	node[pos=0.1,above]	{$\hat{u}_1$};
	
	\draw[->]	($ (1SW) + (\LBox,0.23*\HBox) $)	--	($ (2SW) + (0,0.23*\HBox) $)	node[pos=0.2,above]	{$x_1$} node[pos=0.8,above] {$w_2$};
	\draw[->]	($ (h1SW) + (\LBox,0.23*\HBox) $)	--	($ (h2SW) + (0,0.23*\HBox) $)	node[pos=0.2,below]	{$\hat{x}_1$} node[pos=0.8,above] {$\hat{w}_2$};
	\draw[->]	($ (2SW) + (\LBox,0.5*\HBox) $)	--	($ (2SW) + (\LBox+\LArr,0.5*\HBox) $)	node[pos=0.9,above]	{$x_2$};
	\draw[->]	($ (h2SW) + (\LBox,0.5*\HBox) $)	--	($ (h2SW) + (\LBox+\LArr,0.5*\HBox) $)	node[pos=0.9,below]	{$\hat{x}_2$};
	\draw[->]	($ (12SW) + (2*\LBox,0.5*\HBox) $)	--	($ (12SW) + (2*\LBox+\LArr,0.5*\HBox) $)	node[pos=0.9,above]	{$x_2$};
	\draw[->]	($ (h12SW) + (2*\LBox,0.5*\HBox) $)	--	($ (h12SW) + (2*\LBox+\LArr,0.5*\HBox) $)	node[pos=0.9,below]	{$\hat{x}_2$};
	
	\draw[<->,dashed]	($ (1SW) + (0.5*\LBox,-\x) $)	--	($ (h1SW) + (0.5*\LBox,\HBox+\x) $)	node[pos=0.5,left]	{$R_{\varepsilon_10}$};
	\draw[<->,dashed]	($ (2SW) + (0.5*\LBox,-\x) $)	--	($ (h2SW) + (0.5*\LBox,\HBox+\x) $)	node[pos=0.5,left]	{$R_{\varepsilon_2\tilde{\varepsilon}_2}$};
	\draw[<->,dashed]	($ (12SW) + (0.5*\LBox,-\x) $)	--	($ (h12SW) + (0.5*\LBox,\HBox+\x) $)	node[pos=0.5,left]	{$R_{\varepsilon\tilde{\varepsilon}}$};
	
	\draw[<->,dashed]	($ (1SW) + (\LBox+\y,0.23*\HBox-\x) $)	--	($ (h1SW) + (\LBox+\y,0.23*\HBox+\x) $)	node[right,pos=0.5,align=center]	{$\norm{x_1-\hat{x}_1}$\\$ \leq \varepsilon_1 =: \tilde{\varepsilon}_2$};
	\draw[<->,dashed]	($ (2SW) + (\LBox+\y,0.5*\HBox-\x) $)	--	($ (h2SW) + (\LBox+\y,0.5*\HBox+\x) $)	node[right,pos=0.5,align=center]	{$\norm{x_2-\hat{x}_2}$\\$ \leq \varepsilon_2$};
	
	\draw[->, line width=0.7mm]	($ (h2SW) + (\LBox+4.2*\y,1.5*\HBox) $)	--	($ (h2SW) + (\LBox+4.2*\y+0.6*\LArr,1.5*\HBox) $);
\end{tikzpicture}
 \caption{Illustration of the compositional abstraction of a simple network of metric systems using disturbance bisimulation, as formalized in \Thm~\ref{thm:composition of ADB MS}.}\label{fig_schematic}
\end{figure*}
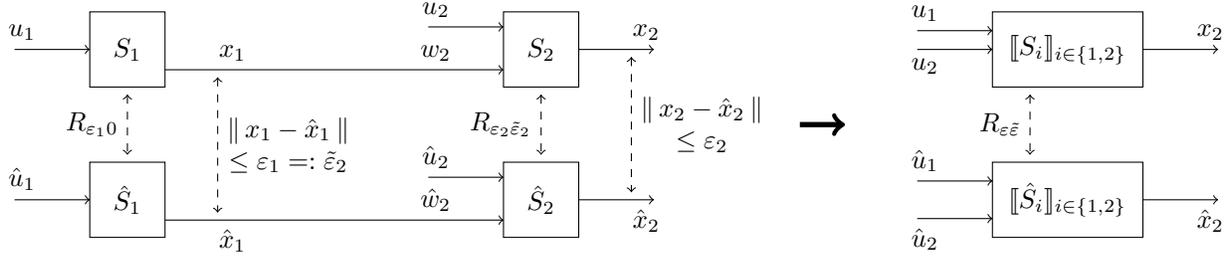

In the (monolithic) abstraction-based controller synthesis framework, a metric system $S$ is abstracted to a finite state metric system $\hat{S}$ s.t.\ a binary relation holds between the state space of the two which ensures that controllers synthesized for $\hat{S}$ can be refined to controllers for $S$. 
If disturbances are present in the system, these relations can be explained as follows. Consider two systems
$S$ and $\hat{S}$ that are \emph{approximately bisimilar} (as e.g.\ used in \cite{PolaPB14}). This relation requires that whatever input $\mu$ was chosen for $S$ (resp. $\hat{S}$) by its controller and whatever disturbance $\nu$ is currently present in $S$ (resp. $\hat{S}$), there exists a way to ensure that $\mu'$ and $\nu'$ can be chosen for $\hat{S}$ (resp. $S$), s.t.\ states which where initially $\varepsilon$-close are also $\varepsilon$-close at the next sampling instance (after applying these input and disturbance trajectories). While the assumption on choosing $\mu'$ appropriately can be justified by a careful controllers synthesis, it is unrealistic to assume that the choice of the disturbance signal for the second system is under the system designers control. 

Intuitively, controller synthesis in the presence of disturbances requires a relation where $\mu$ and $\mu'$ must be picked by both controllers s.t. that trajectories 
stay close for \emph{all possible} disturbances in both systems. 
While \emph{approximate alternating bisimulation} requires a different quantifier alternation, it also does not capture the above intuition as disturbances are still existentially quantified. In \cite{pola2009symbolic}, where approximately alternating bisimulations are used for controller synthesis, this problem is circumvented by assuming that only the system $S$ is subject to disturbances and the disturbance space of the abstraction $\hat{S}$ can be engineered in a way that the given relation is automatically fulfilled for all present disturbances. 
Unfortunately, this approach is not applicable to compositional abstraction as the disturbance signals of the abstractions are given by the abstract state trajectories of neighboring components and can therefore not be freely chosen.

Consider for example the network of metric systems $\set{S_{i}}_{i\in \set{1,2}}$ and their abstractions $\set{\hat{S}_{i}}_{i\in \set{1,2}}$ depicted in Fig.~\ref{fig_schematic}.
The disturbance signal $\hat{\nu}_2$ (the continuous time version of $\hat{w}_2=\hat{x}_1$) applied to $\hat{S}_2$ is the piecewise constant state trajectory $\hat{\xi}_1$ of $\hat{S}_1$ and the disturbance signal $\nu_2$ (the continuous time version of $w_2$) applied to $S_2$ is the continuous state trajectory $\xi_1$  of $S_1$. Hence, both signals are provided by $S_1$ and $\hat{S}_1$ which are assumed to be controlled independently of $S_2$ and $\hat{S}_2$.  However, as both disturbance signals are the state trajectories of related systems we know that at sampling instances, $\nu_2$ and $\hat{\nu}_2$ are $\varepsilon_1$-close. Hence, we can use this knowledge about the mismatch of disturbance trajectories in the relation, as shown in the following formal definition.



\begin{definition} \label{def:DisturbanceBisimulation}
Let $\Sys{1}$ and $\Sys{2}$ be two metric systems, with state-spaces $\State{1},\State{2}\subseteq\State{}$ 
and disturbance sets
$W_{1},W_{2}\subseteq W\subseteq \mathbb{R}^p$.
Furthermore, let $\State{}$ admit the metric $d:\State{}\times\State{}\rightarrow\mathbb{R}_{\geq 0}$ and $W$ admit the vector-valued metric
$\vmetric:W\times W\rightarrow\mathbb{R}_{\geq 0}^r$, $1\leq r\leq p$.
A binary relation $R \subseteq X_{1}\times X_{2}$ is a \emph{disturbance bisimulation with parameters $(\varepsilon,\tilde{\varepsilon})$} where $\varepsilon\in \poszreal{}$ and $\tilde{\varepsilon}\in\mathbb{R}_{\geq 0}^r$,
iff for each $(x_{1},x_{2})\in R$:
	\begin{enumerate}[(a)]
\item $d(x_{1},x_{2})\leq\varepsilon$; 
\item for every $\IPETraj{1}\in \mathcal{U}_1$ there exists a $\IPETraj{2} \in \mathcal{U}_2$ such that for all $\IPITraj{2}\in \mathcal{W}_{\tau,2}$ and $\IPITraj{1} \in \mathcal{W}_{\tau,1}$ with
$\vmetric(\nu_1(0),\nu_2(0))\leq \tilde{\varepsilon}$, we have that
$(\delta_{\tau,1}(x_{1},\IPETraj{1},\IPITraj{1}), \delta_{\tau,2}(x_{2}, \IPETraj{2}, \IPITraj{2}))\in R$; and
\item for every $\IPETraj{2}\in \mathcal{U}_2$ there exists a $\IPETraj{1} \in \mathcal{U}_1$ such that for all $\IPITraj{1}\in \mathcal{W}_{\tau,1}$ and $\IPITraj{2} \in \mathcal{W}_{\tau,2}$ with
$\vmetric(\nu_1(0),\nu_2(0))\leq \tilde{\varepsilon}$, we have that
$(\delta_{\tau,1}(x_{1},\IPETraj{1},\IPITraj{1}), \delta_{\tau,2}(x_{2}, \IPETraj{2}, \IPITraj{2}))\in R$.
	\end{enumerate}
Two systems $\Sys{1}$ and $\Sys{2}$ are said to be \emph{disturbance bisimilar} with parameters $(\varepsilon,\tilde{\varepsilon})$ if
there is a disturbance bisimulation relation 
$R$ between $\Sys{1}$ and $\Sys{2}$ with parameters $(\varepsilon,\tilde{\varepsilon})$.
\end{definition}



As our first main result we show in the following theorem that disturbance bisimulation naturally extends from related components in a network to subsystems composed from them, which is also illustrated for a simple network in Fig.~\ref{fig_schematic}. 
\begin{theorem} \label{thm:composition of ADB MS}
Let $\set{S_{i}}_{i\in I}$ and $\set{\hat{S}_{i}}_{i\in I}$ be sets of compatible metric systems, s.t. for all $i\in I$, $S_{i}$ and $\hat{S}_{i}$ are disturbance bisimilar w.r.t. parameters $(\varepsilon_i,\tilde{\varepsilon}_i)$.
If
\begin{equation}\label{equ:tildevarepsilon 1}
\textstyle \tilde{\varepsilon}_i:=\prod_{j\in\nbr_{\mathcal{I}}(i)} \set{\varepsilon_j}
\end{equation}
 then for any given $I'\subseteq I$, the relation
 			\begin{align}\label{equ:eqclass_subsystems 1}
				\eqclass{} = &\lbrace (\transpose{ x^T_{1}\delimit\ldots\delimit x^T_{|I'|}} , \transpose{\hat{x}^T_{1}\delimit\ldots\delimit\hat{x}^T_{|I'|}})\in X_{I'} \times \hat{X}_{I'} \ |\notag\\
				&\quad(x_{i},\hat{x}_{i})\in\eqclass{i}, \forall i\in I'
				)\rbrace
			\end{align}
		 is an approximate disturbance bisimulation between $\llbracket S_{i}\rrbracket_{i\in I'}$ and $\llbracket \hat{S}_{i}\rrbracket_{i\in I'}$ with parameters
		 	\begin{align*}
				\textstyle\varepsilon{} \textstyle=\norm{\prod_{i\in I'}\set{\varepsilon_{i}}} \text{ and }~ 
				\tilde{\varepsilon}\textstyle=\prod_{j \in \nbr_{\mathcal{I}}(I')} \set{\varepsilon_{j}}.
			\end{align*}
\end{theorem}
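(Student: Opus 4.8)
The plan is to verify directly that the relation $\eqclass{}$ defined in \eqref{equ:eqclass_subsystems 1} satisfies the three clauses of \Def~\ref{def:DisturbanceBisimulation} for the composed systems $\llbracket S_i \rrbracket_{i\in I'}$ and $\llbracket \hat{S}_i\rrbracket_{i\in I'}$, reducing each clause to the corresponding clause of the $|I'|$ componentwise bisimulations $\eqclass{i}$. Throughout, I write a composed state as $x = \prod_{i\in I'}\set{x_i}$ and a composed input as $\mu = \prod_{i\in I'}\set{\mu_i}$, and I recall from the construction of the composition that the disturbance fed into component $i$ decomposes as $\nu_i = \nu_i^c\times\nu_i^e$, where $\nu_i^c$ is the internally determined coupling trajectory carrying the continuous evolutions $\xi_j$, $j\in\nbr_{\IntCon'}(i)$, of the neighbors inside $I'$, and $\nu_i^e$ is the part drawn from the external disturbance of the composed system, indexed by $\nbr_{\IntCon\setminus\IntCon'}(i)$.

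Clause (a) is immediate: for $(x,\hat{x})\in\eqclass{}$ we have $(x_i,\hat{x}_i)\in\eqclass{i}$ and hence $d_i(x_i,\hat{x}_i)\leq\varepsilon_i$ for every $i\in I'$; since the infinity norm is monotone on nonnegative vectors, $d(x,\hat{x}) = \norm{\prod_{i\in I'}\set{d_i(x_i,\hat{x}_i)}}\leq\norm{\prod_{i\in I'}\set{\varepsilon_i}} = \varepsilon$.

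Clause (b) is the heart of the argument. Given a composed input $\mu=\prod_{i\in I'}\set{\mu_i}$, I apply clause (b) of each componentwise bisimulation $\eqclass{i}$ to obtain, for each $i\in I'$, an input $\hat{\mu}_i$ for $\hat{S}_i$, and I set $\hat{\mu}=\prod_{i\in I'}\set{\hat{\mu}_i}$. Crucially, each $\hat{\mu}_i$ is chosen before, and independently of, any disturbance; this is exactly the quantifier order in \Def~\ref{def:DisturbanceBisimulation}(b) that makes the construction compositional. Now fix arbitrary external disturbances $\nu,\hat{\nu}$ with $\vmetric(\nu(0),\hat{\nu}(0))\leq\tilde{\varepsilon}=\prod_{j\in\nbr_{\IntCon}(I')}\set{\varepsilon_j}$. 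For each $i\in I'$ I must check that the full disturbances $\nu_i=\nu_i^c\times\nu_i^e$ and $\hat{\nu}_i=\hat{\nu}_i^c\times\hat{\nu}_i^e$ satisfy the hypothesis $\vmetric(\nu_i(0),\hat{\nu}_i(0))\leq\tilde{\varepsilon}_i=\prod_{j\in\nbr_{\IntCon}(i)}\set{\varepsilon_j}$ needed to invoke $\eqclass{i}$. I verify this coordinatewise over $j\in\nbr_{\IntCon}(i)$: for a coupling neighbor $j\in\nbr_{\IntCon'}(i)$ the $j$-th coordinates of $\nu_i(0)$ and $\hat{\nu}_i(0)$ are $\xi_j(0)=x_j$ and $\hat{\xi}_j(0)=\hat{x}_j$, so clause (a) of $\eqclass{j}$ gives $d_j(x_j,\hat{x}_j)\leq\varepsilon_j$; for an external neighbor $j\in\nbr_{\IntCon\setminus\IntCon'}(i)$ the corresponding coordinate is inherited from $\nu,\hat{\nu}$ and is bounded by $\varepsilon_j$ by the standing hypothesis on $\vmetric(\nu(0),\hat{\nu}(0))$. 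Either way the $j$-th coordinate is at most $\varepsilon_j$, which is precisely the $j$-th entry of $\tilde{\varepsilon}_i$ by \eqref{equ:tildevarepsilon 1}, so the required bound holds. Invoking $\eqclass{i}$(b) then yields $(\delta_{\tau,i}(x_i,\mu_i,\nu_i),\hat{\delta}_{\tau,i}(\hat{x}_i,\hat{\mu}_i,\hat{\nu}_i))\in\eqclass{i}$ for every $i\in I'$, which by the definition \eqref{equ:eqclass_subsystems 1} of $\eqclass{}$ says exactly that the composed successors lie in $\eqclass{}$.

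Clause (c) follows by the same argument with the roles of $\llbracket S_i\rrbracket_{i\in I'}$ and $\llbracket\hat{S}_i\rrbracket_{i\in I'}$ exchanged, using clause (c) of each $\eqclass{i}$. The only step requiring care, and the one I expect to be the main obstacle, is the coordinatewise disturbance bound in clause (b): one has to track that the disturbance space of each component splits into a coupling part, whose initial mismatch is controlled by the state-closeness of neighbors \emph{inside} $I'$ via clause (a), and an external part, whose initial mismatch is the externally supplied bound, and that the specific choice $\tilde{\varepsilon}_i=\prod_{j\in\nbr_{\IntCon}(i)}\set{\varepsilon_j}$ is exactly what makes both contributions fit under the threshold needed to re-apply the componentwise relation. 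It is also worth noting that because \Def~\ref{def:DisturbanceBisimulation}(b) constrains only the \emph{initial} values $\nu_i(0),\hat{\nu}_i(0)$ of the disturbances, the fact that the coupling trajectories $\xi_j$ and $\hat{\xi}_j$ differ over the whole sampling interval, and that the interconnection may contain cycles, is harmless.
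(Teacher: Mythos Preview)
Your proof is correct and follows essentially the same approach as the paper's: verify clauses (a)--(c) of \Def~\ref{def:DisturbanceBisimulation} componentwise, splitting each subsystem's disturbance into a coupling part (bounded via clause (a) of the neighbor's relation) and an external part (bounded by the hypothesis on the composed disturbance), and then reassemble using \eqref{equ:eqclass_subsystems 1}. Your explicit remarks on the quantifier order for $\hat{\mu}_i$ and on why only the initial disturbance values matter are helpful clarifications that the paper leaves implicit, but the argument is the same.
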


\begin{proof}
	We prove all three parts of \Def~\ref{def:DisturbanceBisimulation} separately.\\
	\begin{inparaenum}[(a)]
		\item We pick a related tuple of trajectories $(x,\hat{x})\in\eqclass{}$ with $x = \transpose{ x^T_{1}\delimit\ldots\delimit x^T_{|I'|}}$ and $\hat{x} = \transpose{\hat{x}^T_{1}\delimit\ldots\delimit\hat{x}^T_{|I'|}}$. Then \eqref{equ:eqclass_subsystems 1} implies for all $i$, $(x_i,\hat{x}_i)\in \eqclass{i}$, which in turn gives $d_i(x_i,\hat{x_i})\leq \varepsilon_i$. 
		 This immediately gives $d(x,\hat{x}) = \norm{\prod_{j\in I'} \set{d_j(x_j,\hat{x}_j)} }
		 \leq \norm{\prod_{i\in I'}\set{\varepsilon_i}} = \varepsilon$.\\ 
		\item We pick the same related state tuple $(x,\hat{x})\in\eqclass{}$.
		Note that the choice of $(x,\hat{x})$ automatically fixes the initial point of the coupling disturbances for the individual subsystems $\nu^c_i(0)$ and $\hat{\nu}^c_i(0)$ for $i \in I'$ s.t.
		${\nu^c_i(0)}=  \prod_{j\in\nbr_{\mathcal{I}'}(i)}\lbrace x_j \rbrace$ and ${\hat{\nu}^c_i(0)}=\prod_{j\in\nbr_{\mathcal{I}'}(i)}\lbrace \hat{x}_j \rbrace$.
		As $(x_j,\hat{x}_j)\in \eqclass{j}$ we have $d(x_j,\hat{x}_j)\leq\varepsilon_j$. Using the definition of $\vmetric$ in \eqref{equ:defvmetric_composed} we therefore have $\vmetric({\nu^c_i(0)},{\hat{\nu}^c_i(0)})\leq\prod_{j \in \nbr_{\IntCon'}(i)} \set{\varepsilon_j}$.
		Now pick $\mu = \transpose{\mu_1\delimit\ldots\delimit \mu_{|I'|}}\in\mathcal{U}_\tau$, and $\nu\in\mathcal{W}_\tau,~\hat{\nu}\in\hat{\mathcal{W}}_{\tau}$ s.t. $\vmetric({\nu(0)},{\hat{\nu}}(0))\leq\tilde{\varepsilon}=\prod_{j \in \nbr_{\mathcal{I}}(I')} \set{\varepsilon_{j}}$. Recall from the definition of the composed metric systems that $\nu=\prod_{i\in I'} \set{\nu_i^e}$. 
		With this, it follows that 
%
%
		$\vmetric({\nu^e_i}(0),{\hat{\nu}^e_i}(0))\leq\prod_{j \in \nbr_{\IntCon\setminus\IntCon'}(i)} \set{\varepsilon_j}$. Hence
%
\begin{multline*}
			\vmetric( {\nu_i}(0), {\hat{\nu}_i}(0) ) = \vmetric\left( \begin{bmatrix} {\nu^c_i}(0)\\ {\nu^e_i}(0) \end{bmatrix},\begin{bmatrix} {\hat{\nu}^c_i}(0)\\ {\hat{\nu}^e_i}(0) \end{bmatrix} \right)\leq 
			 \prod_{j\in \nbr_{\IntCon}(i)} \set{\varepsilon_j}.
		\end{multline*}
		Using \eqref{equ:tildevarepsilon 1} we therefore have $\vmetric( {\nu_i}(0), {\hat{\nu}_i}(0) )\leq \tilde{\varepsilon}_i$.
		 With these local disturbance vectors and the fact that $S_{i}$ and $\hat{S}_{i}$ are approximately disturbance bisimilar w.r.t. $(\varepsilon_i,\tilde{\varepsilon}_i)$ it follows immediately from \Def~\ref{def:DisturbanceBisimulation} (b) that for any local control input $\mu_i$ there exits $\hat{\mu}_i$ such that $(\delta_{\tau,i}(x_i,\mu_i,\nu_i),\hat{\delta}_{\tau,i}(\hat{x}_i,\hat{\mu}_i,\hat{\nu}_i))\in\eqclass{i}$ for $i\in I'$. Then by \eqref{equ:eqclass_subsystems 1}, it follows that $(\delta_{\tau}(x,\mu,\nu),\hat{\delta}_{\tau}(\hat{x},\hat{\mu},\hat{\nu}))\in\eqclass{}$.\\
%
%
		 \item The other direction can be shown based on the same reasoning as for part (b) and is therefore omitted.
		 \end{inparaenum}
	\end{proof}

\section{Control Systems}\label{sec:ControlSystems}

We start the second part of the paper by introducing some necessary preliminaries on control systems and their stability.

\subsection{Preliminaries}

A \emph{control system} $\Sigma = (X, U, W, \mathcal{U}, \mathcal{W}, f)$
consists of a state space $X$, an input space $U$, a disturbance space $W$, 
a set of input signals $\IPECurve{}$, a set of disturbance signals $\IPICurve{}$,
and 
a continuous state transition function
${\TransFunc{}}: {X\times U \times W}\rightarrow {X}$.
We assume
$X= \real{n}$, $U= \real{m}$ and $W=\real{p}$ to be normed Euclidean spaces.
Furthermore, we assume that the
sets 
$\IPECurve{}$ and $\IPICurve{}$ consist 
of measurable essentially bounded functions ${\mu}: {\poszreal} \rightarrow {U}$ and 
${\nu}: {\poszreal} \rightarrow {W}$, respectively,
and 
$f$ satisfies the following Lipschitz assumption:
there exists a constant $L > 0$ \st
$\norm{\TransFunc{}(x, u, w) - \TransFunc{}(y, u, w)} \leq L \norm{x - y}$
for all $x, y\in X$, $u\in U$, and $w\in W$, where $\norm{\cdot}$ is a norm.


A trajectory 
${\StateTraj{}{}}: {(a,b)}\rightarrow {\StateSpace{}}$ 
associated with the control system $\Sigma{}$ and 
signals $\IPETraj{}\in\IPECurve{}$ and $\IPITraj{}\in\IPICurve{}$ 
is an absolutely continuous curve  satisfying: 
\begin{equation}
\StateTrajDot{}(t) = \TransFunc{}(\StateTraj{}{}(t),\IPETraj{}(t),\IPITraj{}(t)) \label{eq:state trajectory}
\end{equation}
for almost all $t\in (a,b)$.
Although we define trajectories over open intervals, we talk about trajectories $\StateTraj{}{}:[0,\tau] \rightarrow X$
for $\tau\in\posreal{}$, with the understanding that $\StateTraj{}{}$ is the restriction to $[0,\tau]$ of some trajectory
defined on an open interval containing $[0,\tau]$.
We denote by $\StateTraj{}{x\IPETraj{}\IPITraj{}}(\cdot)$ the solution of differential equation \eqref{eq:state trajectory} with initial condition $x$ and with input and disturbance signals $\IPETraj{}$ and $\IPITraj{}$, respectively. This solution is unique due to the Lipschitz continuity assumption on $f$.
	Thus $\StateTraj{}{x\IPETraj{}\IPITraj{}}(t)$ is the state reached by the 
	trajectory $\StateTraj{}{}$ starting from $x$ with input and disturbance signals $\IPETraj{}$ and $\IPITraj{}$.
A control system $\Sigma$ is \emph{forward complete}
		if every trajectory defined on an interval $(a,b)$ can be extended to an interval of the form $(a,\infty)$.

	If the control system $\Sigma$ is undisturbed, we define $W=\{0\}$. In this case we occasionally represent $\Sigma$ by the tuple $\Sigma = (X, U, \mathcal{U}, f)$ and use $f:X\times U\rightarrow X$ with the understanding that \eqref{eq:state trajectory} holds for $f$ for the zero trajectory $\IPITraj{}:{\poszreal}\rightarrow \set{0}$ whenever  $\dot{\xi}(t)=f(\xi(t),\mu(t))$ holds.

\subsection{Input-to-state Lyapunov functions}\label{sec:prelim:LF}

A continuous function \mbox{$\gamma:\mathbb{R}_{\geq 0}\rightarrow\mathbb{R}_{\geq 0}$} is said to belong to class 
$\mathcal{K}_{\infty}$ if it is strictly increasing, \mbox{$\gamma(0)=0$}, and
$\gamma(r)\rightarrow\infty$ as $r\rightarrow\infty$. 
A continuous function \mbox{$\beta:\mathbb{R}_{\geq 0}\times\mathbb{R}_{\geq 0}\rightarrow\mathbb{R}_{\geq 0}$} 
is said to belong to class $\mathcal{KL}$ if, for each fixed $s$, 
the map $\beta(r,s)$ belongs to class $\mathcal{K}_{\infty}$ with respect to $r$ and, 
for each fixed nonzero $r$, the map $\beta(r,s)$ is decreasing with respect to $s$ and $\beta(r,s)\rightarrow0$ as \mbox{$s\rightarrow\infty$}.

\begin{definition} \label{def:lyapunov function}
Given a control system $\Sigma$, a smooth function 
${V}:{X\times X}\rightarrow {\real{}}$ is said to be 
a \emph{$\dISS$ Lyapunov function} for $\Sigma$ if there exist 
$\lambdaL{}\in\posreal{}$ and $\kifunc$ functions $\alphalow{}$, $\alphahigh{}$, 
$\sigmau{}$, and $\sigmad{}$
\st for any $\state{},\state{}'\in X$, $\ipe{},\ipe{}'\in\IPE{}$, and
$\ipi{}, \ipi{}' \in \IPI{}$, the following holds:
	\begin{align}
		&\alphalow{}(\norm{\state{}-\state{}'})  \leq V(\state{},\state{}') \leq \alphahigh{}(\norm{\state{}-\state{}'})\quad\text{and} \label{eqn:lyapunov condition 1}\\
&\partderiv{V}{\state{}}  \TransFunc{}(\state{},\ipe{},\ipi{}) + \partderiv{V}{\state{}'}\TransFunc{}(\state{}',\ipe{}', \ipi{}') \leq \nonumber\\
&\quad -\lambda V(\state{},\state{}') + \sigmau{}(\norm{\ipe{}-\ipe{}'}) + \sigmad{}(\norm{\ipi{} - \ipi{}'}). \label{eqn:lyapunov condition 2}
\end{align}
In this case we say that the control system $\Sigma$ \emph{admits} a Lyapunov function $V$, \emph{witnessed by} 
$\lambda$, $\alphalow{}$, $\alphahigh{}$, $\sigmau{}$, and $\sigmad{}$.
\end{definition}


%

Existence of $\dISS$ Lyapunov functions is tightly connected with the control system $\Sigma$ being \emph{incrementally globally input-to-state stable} ($\dISS$) \cite{Angeli02,TabuadaBook}.
It is shown in \cite{Angeli02} that under mild assumptions on control systems 
the existence of a $\dISS$ Lyapunov function is equivalent to $\dISS$ stability.



We further restrict the class of $\dISS$ Lyapunov functions by requiring the following property.
We assume there exists a $\KInf$ function $\gammafunc{}$ \st for any $\state{},\state{}',\state{}''\in\mathbb{R}^{n}$ it holds that
	\begin{equation}
	\label{eq:ISS_cond}
		\Lypv{}(\state{}',\state{}) - \Lypv{}(\state{}'',\state{}) \leq \gammafunc{}(\norm{\state{}'-\state{}''}).
	\end{equation} 
Note that this is a very mild assumption which is satisfied by most $\dISS$ Lyapunov functions in practice (e.g., quadratics, polynomials, and square roots).

\section{Disturbance Bisimilar Symbolic Models for Control Systems}\label{sec:MonolithicAbstraction}

This section adapts the construction of time-sampled and abstract metric systems from \cite{PolaGT08} and \cite{GirardPolaTabuada_2010} to the notion of disturbance bisimulation.
We start by defining a metric system as a time-sampled version of a control system.

\begin{definition}
\label{def:control-sys-disctime}
Given a control system $\Sigma = (X, U, \mathcal{U}, W, \mathcal{W}, f)$,
and a \emph{time-sampling parameter} $\tau \in \posreal{}$, the
\emph{discrete-time metric system} induced by $\Sigma$ is defined by
\begin{equation}
 \ContSysDT{\Sigma} = (X, U, \mathcal{U}_\tau, W, \mathcal{W}_\tau, \delta_\tau)
\end{equation}
s.t.\
$\mathcal{U}_{\tau}$ and $\mathcal{W}_{\tau}$ are defined over $U$ and $W$, respectively, as in \eqref{equ:def:mathcalUW} and
$\delta_\tau(x,\mu,\nu) = \StateTraj{}{x\mu\nu}(\tau)$. 
We equip $X$ with the metric $\dist{x}{x'} \defeq \norm{x -x'}$.
\end{definition}

To define an abstract metric system $\ContSysA{}{\Sigma}$ induced by $\Sigma$ which is disturbance bisimilar to $\ContSysDT{\Sigma}$ we need some notation to discretize the state, input, and disturbance spaces of $\Sigma$.


For any $A\subseteq \real{n}$ and $\StateQnt{}$ with elements $\StateQnt{i}>0$, we define 
$\DiscSpace{A}{\StateQnt{}}:=
\set{ (a_1,\ldots, a_n) \in A \mid a_i = 2k\StateQnt{i}, k\in\mathbb{Z}, i=1,\ldots, n}.$
For $x\in\real{n}$ and vector $\lambda$ with elements $\lambda_i>0$, let
$\mathbb{B}_\lambda(x) = \set{x'\in\real{n} \mid \norm{x_i-x_i'}\leq \lambda_i}$ denote the closed rectangle centered at $x$.
Note that for any $\lambda \geq \StateQnt{}$ (element-wise), the collection of sets
$\mathbb{B}_\lambda(q)$ with $q\in \DiscSpace{\real{n}}{\StateQnt{}}$ is a covering of $\real{n}$,
that is, $\real{n} \subseteq \cup \set{\mathbb{B}_\lambda(q)\mid q\in \DiscSpace{\real{n}}{\StateQnt{}} }$.

We will use this insight to discretize the state and the input space of $\Sigma$ using discretization parameters $\eta$ and $\omega$, respectively. For the disturbance space $W$ we allow the discretization of $W$ to be predefined. Intuitively, this models the fact that  discretizing all state spaces $X_i$ directly discretizes the disturbance spaces $W_i$ in a network of control systems, formally defined in \Sec~\ref{sec:NetworkControlSystems}. We therefore make the following general assumptions on the discretizaion of $W$.

\begin{assumption}\label{ass:Wt}
 Let $\Sigma = (X, U, \mathcal{U}, W,\allowbreak \mathcal{W}, f)$ be a control system.
We assume there exists a countable set $\Wt\subseteq W$, a vector $\tilde{\varepsilon} \in \mathbb{R}_{\geq 0}^r$, and
$\Wt$ is equipped with a (possibly vector-valued) metric $\vmetric:W\times W\rightarrow\mathbb{R}_{\geq 0}^r,~1\leq r\leq p$ 
s.t.\ for all $w\in W$ there exists a $\wt\in\Wt$ s.t.
       \begin{equation} \label{equ:Vtildeeps}
        \vmetric(w,\wt)\leq\tilde{\varepsilon} \quad
        \text{and}
        \quad \norm{w-\wt}\leq\norm{\vmetric(w,\wt)}.
       \end{equation}
\end{assumption}

Using this assumption we formally define the abstract metric system $\ContSysA{}{\Sigma}$ induced by $\Sigma$ as follows. 

\begin{definition} \label{def:abstract system}
       Let $\Sigma = (X, U, \mathcal{U}, W,\allowbreak \mathcal{W}, f)$ be a control system for which \Ass~\ref{ass:Wt} holds.  
       Given three constants $\tau \in \posreal{}$, $\StateQnt{} \in \posreal{}$, and $\IPQnt{} \in \posreal{}$, the abstract metric system induced by $\Sigma$ is defined by
       	\begin{equation}
       	 \ContSysA{}{\Sigma{}} = (\StateA{}, 
	                                \DiscSpace{U}{\IPQnt{}},\IPECurveA{},
	                                \Wt,\IPICurveA{},
	                                \deltaA{})
       	\end{equation}
    \st $\StateA{} = \DiscSpace{X}{\StateQnt{}}$, $\IPECurveA{}$ is defined over $\DiscSpace{U}{\IPQnt{}}$, as in \eqref{equ:def:mathcalU},
       \begin{equation*}\label{equ:piecewise constant W}
       		\IPICurveA{} := \set{\nu:[0,\tau]\rightarrow \Wt \mid \forall t,k\in [0,\tau]~.~\nu(t) = \nu(k)},
       \end{equation*}
       and for all $t_1,t_2\in [0,\tau)$, 
		 \begin{align*}
		  &\deltaA{}(x, \mu, \nu) = \set{x'\in\StateAb{} \mid \norm{\StateTraj{}{x\mu\nu}(\tau) - x'} \leq \StateQnt{}}.
		 \end{align*}
	We equip $\StateA{}$ with the metric $\dist{x}{x'} \defeq \norm{x -x'}$. 
\end{definition}

Following the previous discussion, the obvious interpretation of $\IPICurveA{}$ in a network of symbolic abstract models is that $\IPICurveA{}$ actually collects the constant state trajectories of neighboring systems abstractions. 
Therefore, given any $\nu\in\IPICurveDT{}$ and $\hat{\nu}\in\IPICurveA{}$ s.t. $\vmetric({\nu}(0),\hat{\nu}(0))\leq \DQnt{}$ holds, we see that the distance between $\hat{\nu}$ (constant signal) and $\nu$ (time-varying signal) potentially grows in the inter-sampling period. To ensure that $\ContSysDT{\Sigma}$ and $\ContSysA{}{\Sigma}$ are disturbance bisimilar, we have to make sure that the effect of this mismatch on the distance of the trajectories in both systems is small. This is obviously true if the effect of the disturbance on the dynamics of the underlying control system is small. This is formalized by the following assumption.

\begin{assumption}\label{ass:psi}
 Let $\Sigma$ be a control system with $\dISS$ Lyapunov function $V$ satisfying \eqref{eq:ISS_cond}. 
	We assume there exists a constant $\psi > 0$ \st
	\begin{align}
		\frac{d}{dz}\sigma_{d}(z)\cdot\norm{\frac{d}{dt}\nu(t)} \leq \psi
	\end{align}
	holds for any $z\in[0,\norm{\tilde\varepsilon}]$, $t\in[0,\tau]$ and any disturbance $\nu\in\mathcal{W}_{\tau}$.
\end{assumption}

 Given \Ass~\ref{ass:psi} it is easy to show that the effect of the mismatch between $\nu$ and $\hat{\nu}$ on the state has a growth rate not greater than $\psi$, i.e.,
		\begin{align}\label{equ:proof:growthbound}
		 \frac{d\sigma_d\left(\norm{\hat{\nu}(t)-\nu(t)}\right)}{dt} \le \psi.
		\end{align}
From this observation we get the inequality
\begin{equation*}
	\sigma_d\left(\norm{\hat{\nu}(t)-\nu(t)}\right) - \sigma_d\left(\norm{\hat{\nu}(0)-\nu(0)}\right) \le \psi\cdot t,
\end{equation*}
                and therefore 
                \begin{align}\label{equ:proof:growthbound_final}
                \forall t\in[0,\tau]~.~ \sigma_d\left(\norm{\hat{\nu}(t)-\nu(t)}\right)<\sigma_d\left( \norm{\tilde{\varepsilon}} \right) + \psi\cdot \tau.
                \end{align}

This observation will be used to prove our second main result which shows that  $\ContSysDT{\Sigma}$ and $\ContSysA{}{\Sigma}$ are disturbance bisimilar under \Ass~\ref{ass:psi}.


\begin{theorem} \label{thm:sim-approx-single}
	Let $\Sigma$ be a control system with a $\dISS$ Lyapunov function $V$ satisfying \Ass~\ref{ass:psi} and property \eqref{eq:ISS_cond}.
	 Fix $\TimeQnt{}>0$ and $\Wt\subseteq W$ s.t. \Ass~\ref{ass:Wt} holds and let $\ContSysA{}{\Sigma{}}$ be the abstract metric system induced by $\Sigma$. 
If
	\begin{multline}
		\StateQnt{} \leq \textrm{min}\left\lbrace\gammafunc{}^{-1}\lambda^{-1}(1-e^{-\lambdaL{}\tau})\left[ \lambda\alphalow{}(\varepsilon) - \sigmau{}(\IPQnt{})\right.\right.\\
		\left.\left.- \sigmad{}(\norm{\DQnt{}}) - \psi\cdot\tau \right], (\alphahigh{})^{-1}\circ\alphalow{}(\varepsilon)\right\rbrace \label{eq:condition on StateQnt}
	\end{multline}

	then the relation
	\begin{align}
			\eqclass{} = &\left\lbrace (q,\hat{q})\in\StateDT{}\times\StateA{} \ |\ \Lypv{}(q,\hat{q})\leq\alphalow{}(\varepsilon_{}) \right\rbrace \label{eqn:local simulation relation}
	\end{align}
	is a disturbance bisimulation with parameters $(\varepsilon,\tilde{\varepsilon})$ between $\ContSysDT{\ContSysSymb{}}$ and $\ContSysA{}{\ContSysSymb{},\Wt}$.
\end{theorem}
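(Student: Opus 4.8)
The plan is to verify the three clauses of \Def~\ref{def:DisturbanceBisimulation} for the candidate relation \eqref{eqn:local simulation relation}, with $\ContSysDT{\ContSysSymb{}}$ playing the role of $\Sys{1}$ and $\ContSysA{}{\ContSysSymb{},\Wt}$ that of $\Sys{2}$. Clause (a) is immediate: if $(q,\hat{q})\in\eqclass{}$ then $\alphalow{}(\norm{q-\hat{q}})\leq\Lypv{}(q,\hat{q})\leq\alphalow{}(\varepsilon)$ by \eqref{eqn:lyapunov condition 1}, and since $\alphalow{}\in\kifunc$ is strictly increasing this yields $\dist{q}{\hat{q}}=\norm{q-\hat{q}}\leq\varepsilon$. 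The second term $(\alphahigh{})^{-1}\circ\alphalow{}(\varepsilon)$ in \eqref{eq:condition on StateQnt} enforces $\alphahigh{}(\StateQnt{})\leq\alphalow{}(\varepsilon)$; I will use it to guarantee that the $\StateQnt{}$-grid is fine enough that every concrete state is related to its nearest grid state, i.e. the completeness property that makes $R$ usable and keeps the set-valued abstract transition $\deltaA{}$ relation-consistent.

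For clauses (b) and (c) the heart of the argument is a single Gr\"onwall-type estimate on the exact trajectories $\xi_1(\cdot)=\StateTraj{}{q\mu\nu}(\cdot)$ and $\xi_2(\cdot)=\StateTraj{}{\hat{q}\hat{\mu}\hat{\nu}}(\cdot)$. First I fix the input correspondence: in clause (b) I pick $\hat{\mu}$ as the constant input in $\DiscSpace{U}{\IPQnt{}}$ nearest to $u_\mu$, so that $\norm{\mu(t)-\hat{\mu}(t)}\leq\IPQnt{}$ for all $t\in[0,\tau]$; in clause (c) I simply set $\mu=\hat{\mu}$, so that contribution vanishes. Differentiating $\Lypv{}(\xi_1(t),\xi_2(t))$ and invoking \eqref{eqn:lyapunov condition 2} gives $\tfrac{d}{dt}\Lypv{}(\xi_1,\xi_2)\leq-\lambda\Lypv{}(\xi_1,\xi_2)+\sigmau{}(\IPQnt{})+\sigmad{}(\norm{\nu(t)-\hat{\nu}(t)})$. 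The disturbance term is where the inter-sampling mismatch must be controlled: the two signals agree only approximately at $t=0$, since by \Ass~\ref{ass:Wt} (via \eqref{equ:Vtildeeps}) $\norm{\nu(0)-\hat{\nu}(0)}\leq\norm{\vmetric(\nu(0),\hat{\nu}(0))}\leq\norm{\tilde{\varepsilon}}$, while $\hat{\nu}$ stays constant and $\nu$ drifts. Using \Ass~\ref{ass:psi} together with the growth bound \eqref{equ:proof:growthbound_final} I bound $\sigmad{}(\norm{\nu(t)-\hat{\nu}(t)})\leq\sigmad{}(\norm{\tilde{\varepsilon}})+\psi\tau$ uniformly on $[0,\tau]$.

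With these two bounds the right-hand side collapses to a constant-forced dissipation inequality $\tfrac{d}{dt}y\leq-\lambda y+c$ with $c:=\sigmau{}(\IPQnt{})+\sigmad{}(\norm{\tilde{\varepsilon}})+\psi\tau$ and $y(0)=\Lypv{}(q,\hat{q})\leq\alphalow{}(\varepsilon)$. The comparison lemma then gives $\Lypv{}(\xi_1(\tau),\xi_2(\tau))\leq e^{-\lambda\tau}\alphalow{}(\varepsilon)+\lambda^{-1}(1-e^{-\lambda\tau})c$. It remains to pass from the exact abstract endpoint $\xi_2(\tau)$ to an actual abstract successor $\hat{q}'\in\deltaA{}(\hat{q},\hat{\mu},\hat{\nu})$, which by \Def~\ref{def:abstract system} satisfies $\norm{\xi_2(\tau)-\hat{q}'}\leq\StateQnt{}$ (such a grid point exists by the covering property). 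Absorbing this rounding through \eqref{eq:ISS_cond} gives $\Lypv{}(\xi_1(\tau),\hat{q}')\leq\Lypv{}(\xi_1(\tau),\xi_2(\tau))+\gamma(\StateQnt{})$, and the first branch of the $\StateQnt{}$-bound \eqref{eq:condition on StateQnt} is exactly the requirement $\gamma(\StateQnt{})\leq(1-e^{-\lambda\tau})\alphalow{}(\varepsilon)-\lambda^{-1}(1-e^{-\lambda\tau})c$, forcing $\Lypv{}(\xi_1(\tau),\hat{q}')\leq\alphalow{}(\varepsilon)$, i.e. $(\xi_1(\tau),\hat{q}')\in\eqclass{}$. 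Clause (c) follows from the identical estimate with the simpler input matching, the only asymmetry being that there the concrete side supplies the single deterministic successor while the abstract side ranges over its $\StateQnt{}$-ball.

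I expect the main obstacle to be the uniform control of the disturbance-mismatch term over the whole sampling interval rather than only at the sampling instant: this is the genuinely new ingredient compared with the constructions of \cite{PolaGT08,GirardPolaTabuada_2010}, and it is precisely what \Ass~\ref{ass:psi} and the estimate \eqref{equ:proof:growthbound_final} are introduced to supply. A secondary point that needs care is the direction in which \eqref{eq:ISS_cond} is applied when absorbing the state-space rounding, since that rounding perturbs the \emph{abstract} (second) argument of $\Lypv{}$; this is harmless for the symmetric $\dISS$ Lyapunov functions the assumption targets, but should be made explicit in the write-up.
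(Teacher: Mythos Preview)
Your proposal is correct and follows essentially the same route as the paper: verify clause~(a) directly from \eqref{eqn:lyapunov condition 1}, then for (b) and (c) integrate the dissipation inequality \eqref{eqn:lyapunov condition 2} along the two exact trajectories, bound the disturbance term via \eqref{equ:proof:growthbound_final}, and absorb the $\StateQnt{}$-rounding of the abstract successor through \eqref{eq:ISS_cond}, with the first branch of \eqref{eq:condition on StateQnt} closing the estimate. Your flagged concern about which argument of $\Lypv{}$ is perturbed in \eqref{eq:ISS_cond} is apt---the paper's derivation sidesteps it by ordering the pair as $(z,q')$ so that the rounding acts on the first slot, but then tacitly relies on symmetry of $\Lypv{}$ when invoking membership in $\eqclass{}$; making this explicit, as you suggest, would improve both write-ups.
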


Before giving the proof of \Thm~\ref{thm:sim-approx-single} we want to point out that for a given $\psi$ we can select the time sampling parameter $\tau$ sufficiently small so that \eqref{eq:condition on StateQnt} is satisfied. More precisely,
	if $$\lambda\alphalow{}(\varepsilon) > \sigmau{}(\IPQnt{})+\sigmad{}(\norm{\DQnt{}})$$ we can select $\tau$ according to
$$\tau<\frac{1}{\psi}\left[ \lambda\alphalow{}(\varepsilon) - \sigmau{}(\IPQnt{})-\sigmad{}(\norm{\DQnt{}})\right]$$ which guarantees the existence of $\eta>0$ satisfying \eqref{eq:condition on StateQnt}.

\begin{proof}
		First note that \eqref{eq:condition on StateQnt} and \eqref{eqn:lyapunov condition 1} imply
	$\StateQnt{} \leq \ConjFunc{(\alphahigh{})^{-1}}{\alphalow{}(\varepsilon)} \leq (\alphahigh{})^{-1}(\alphahigh{}(\varepsilon)) = \varepsilon$ giving that $\StateQnt{}\leq\varepsilon$, hence ensuring that $\eqclass{}$ is surjective. Furthermore, observe that $\StateA{}\subset\StateDT{}$, hence the metric $d$ on $\StateDT{}$ is also a metric on $\StateA{}$. Now we prove the three parts of \Def~\ref{def:DisturbanceBisimulation} separately.\\
	\begin{inparaenum}[(a)]
		\item By definition of $\eqclass{}$ in \eqref{eqn:local simulation relation}, $(q,\hat{q})\in \eqclass{}$ implies $\Lypv{}(q,\hat{q})\leq\alphalow{}(\varepsilon_{})$. Using \eqref{eqn:lyapunov condition 1} this implies $\alphalow{}(\norm{q-\hat{q}})\leq\alphalow{}(\varepsilon_{})$ and it follows from $\alphalow{}$ being a $\kifunc$-function that $d(q,\hat{q}) = \norm{q-\hat{q}}\leq\varepsilon$. \\
%
		\item Given a pair $(\hat{q}, q)\in\eqclass{}$, for any $\mu\in\IPECurveDT{}$, observe that there exists a $\hat{\mu} \in \IPECurveA{}$ s.t.\ $\norm{u_{\mu}-u_{\hat{\mu}}}\leq \IPQnt{}$ holds.
                We furthermore pick $\hat{\nu}\in\IPICurveA{}$ and $\nu\in\IPICurveDT{}$ s.t. $\vmetric({\nu}(0),\hat{\nu}(0))\leq \DQnt{}$ holds. By applying transitions 
		$q\TransFuncDT{}{\mu,\nu}q'$ and $\hat{q}\TransFuncDT{}{\hat{\mu},\hat{\nu}}z$ we observe that there exists $\hat{q}'\in\StateA{}$ s.t.\ $\norm{\hat{q}'-z}\leq\eta$ and hence $\hat{q}\TransFuncA{}{\hat{\mu},\hat{\nu}}\hat{q}'$.
	   	Now consider Derivation~\eqref{equ:proof:derivation1} which uses \eqref{equ:proof:growthbound_final} obtained from \Ass~\ref{ass:psi}.
	   	\begin{table*}
			\begin{align*}
				& \frac{dV}{dt}(x,x')\le -\lambda V(x,x')+\sigma_u(\|u-u'\|)+\sigma_d(\|w-w'\|)\\
				& \Rightarrow e^{\lambda t}\left[\frac{dV}{dt}(x,x')+\lambda V(x,x')\right]\le  e^{\lambda t}\left[\sigma_u(\|u-u'\|)+\sigma_d(\|w-w'\|)\right]\\
				& \Rightarrow \frac{d}{dt}\left[e^{\lambda t}V(x,x')\right]\le e^{\lambda t}\left[\sigma_u(\|u-u'\|)+\sigma_d(\|w-w'\|)\right]\\
				& \Rightarrow e^{\lambda \tau}V(z,q') - V(\hat q,q) \le \int_0^\tau e^{\lambda t} \sigma_u(\|\mu(t)-\hat\mu(t)\|)dt + \int_0^\tau e^{\lambda t} \sigma_d(\|\nu(t)-\hat\nu(t)\|)dt\\
				& \Rightarrow V(z,q')\le e^{-\lambda \tau} V(\hat q,q) + \int_0^\tau e^{-\lambda(\tau-t)} \sigma_u(\|\mu(t)-\hat\mu(t)\|)dt + \int_0^\tau e^{-\lambda(\tau-t)} \sigma_d(\|\nu(t)-\hat\nu(t)\|)dt\\
				& \Rightarrow V(z,q')\le e^{-\lambda \tau}\underline\alpha(\varepsilon)+\frac{1-e^{-\lambda\tau}}{\lambda}\left[\sup_{t\in[0,\tau]}\sigma_u(\|\mu(t)-\hat\mu(t)\|) + \sup_{t\in[0,\tau]}\sigma_d(\|\nu(t)-\hat\nu(t)\|)\right]\\
				& \Rightarrow V(z,q')\le e^{-\lambda \tau}\underline\alpha(\varepsilon)+\frac{1-e^{-\lambda\tau}}{\lambda}\left[\sigma_u(\|\mu(t)-\hat\mu(t)\|_\infty) + \sigma_d(\|\nu(t)-\hat\nu(t)\|_\infty)\right]\\
				& \Rightarrow V(z,q')\le e^{-\lambda \tau}\underline\alpha(\varepsilon)+\frac{1-e^{-\lambda\tau}}{\lambda}\left[\sigma_u(\omega) + \sigma_d(\|\tilde\varepsilon\|) + \psi\cdot\tau\right]\\
				& \Rightarrow e^{-\lambda \tau}\underline\alpha(\varepsilon)+\frac{1-e^{-\lambda\tau}}{\lambda}\left[\sigma_u(\omega) + \sigma_d(\|\tilde\varepsilon\|) + \psi\cdot\tau \right]+\gamma(\eta)\le\underline\alpha(\varepsilon) \numberthis \label{equ:proof:derivation1}
			\end{align*}
	   	\end{table*}
	   	Hence by \Eqn~\eqref{eqn:local simulation relation}, $(q',\hat{q}')\in\eqclass{}$.\\
		\item Given a pair $(\hat{q}, q)\in\eqclass{}$, for any $\hat{\mu}\in\IPECurveA{}$, observe that we can choose $\mu \in \IPECurveDT{}$ s.t. $\mu = \hat{\mu}$, i.e., $\norm{u_{\hat{\mu}}-u_\mu} = 0$.
   		Given any $\nu\in\IPICurveDT{}$ and $\hat{\nu}\in\IPICurveA{}$ s.t.\ $\vmetric(\nu(0),\hat{\nu}(0))\leq \DQnt{}$, we get $q\TransFuncDT{}{\mu,\nu}q'$ and $\hat{q}\TransFuncDT{}{\hat{\mu} = \mu,\hat{\nu}}z$. Now observe that there exists $\hat{q}'\in\StateA{}$ s.t.\ $\norm{\hat{q}'-z}\leq\eta$ and hence $\hat{q}\TransFuncA{}{\hat{\mu},\hat{\nu}}\hat{q}'$.
%
		With a very similar derivation as in \eqref{equ:proof:derivation1} it follows from \Eqn~\eqref{eqn:local simulation relation} that $(q',\hat{q}')\in\eqclass{}$.
	\end{inparaenum}
\end{proof}

\begin{remark}\label{rem:CompactStateset}
 As we have defined control systems $\Sigma$ w.r.t.\ the Euclidean spaces $X=\real{n}$, $U=\real{m}$ and $W=\real{p}$, the abstract metric system $\ContSysA{}{\Sigma{}}$ only becomes finite (and therefore a symbolic abstraction of the control system $\Sigma$) if we restrict its construction to compact subsets $X'\subset X$ and $U'\subset U$ of the state and input spaces, s.t.\ $X'$ and $U'$ are finite unions of hyper-rectangles with radius $\eta$ and $\omega$, respectively.
 In the control system networks we consider, such a restriction also implies that the disturbance space $W$ becomes compact. 
 However, when synthesizing controllers for the abstract metric system, it needs to be ensured that the closed loop dynamics do not leave the selected compact state set $X'$. This can be done by treating $X'$ as an additional safety constrain during synthesis. We will illustrate this approach in our case study presented in \Sec~\ref{sec:Example}.
\end{remark}


\section{Compositional Abstraction} \label{sec:composition}

We now extend the abstraction procedure presented in the previous section to compositions of control systems.

\subsection{Networks of Control Systems}\label{sec:NetworkControlSystems}
We define networks of control systems in direct analogy to \Sec~\ref{sec:Comp:Metric}. 
Let $\Sigma_i = (X_i, U_i, \mathcal{U}_i, W_i, \mathcal{W}_i, f_i)$,
for $i\in I$, be a control system.
We say that the set of control systems $\set{\Sigma_i}_{i\in I}$ are \emph{compatible for composition} w.r.t.\ the interconnection relation $\mathcal{I}$, if 
for each $i\in I$, we have
$W_i = \prod_{j\in \nbr_\mathcal{I}(i)} {X_j}$,
divided in coupling and external disturbances $W_i^c$ and $W_i^e$, respectively, as defined in \Sec~\ref{sec:Comp:Metric}.
%

If $\set{\Sigma_i}_{i\in I}$ are compatible, we define the \emph{composition} of any subset $I'\subseteq I$ of systems as the control system
$\llbracket \Sigma_i\rrbracket_{i\in I'}= (X, U, \mathcal{U}, W, \mathcal{W}, f)$ where $X$, $U$ and $W$ are defined as in \Sec~\ref{sec:Comp:Metric}.
%
Furthermore, $\mathcal{U}$ and $\mathcal{W}$ are defined as the sets of functions $\mu : \poszreal{} \rightarrow U$ and $\nu : \poszreal{} \rightarrow W$,
such that the projection $\mu_i$ of $\mu$ on to $U_i$ (written $\mu_i=\mu|_{U_i}$) belongs to $\mathcal{U}_i$, and the projection $\nu_i^e$ of $\nu$ on to $W_i^e$ belongs to $\mathcal{W}_i^e$.
The composed transition function is then defined as $f(\prod_{i\in I'} \set{x_i}, \prod_{i\in I'} \set{u_i}, \prod_{i\in I'} \set{w_i^e}) = \prod_{i\in I'} \set{f_i(x_i, u_i, w_i^c \times w_i^e)}$, where $w_i^c = \prod_{j\in \nbr_{\mathcal{I}'}(i)} \set{x_j}$. 
If $I'=I$, then $\Sigma$ is undisturbed, modeled by $W:=\set{0}$.
It is easy to see that $\llbracket \Sigma_i\rrbracket_{i\in I'}$ is again a control system.

\subsection{Assumptions on Interconnecting Disturbances}\label{sec:CA:Assumptions}
Given $I$ and $I' \subseteq I$, consider a set of compatible control systems $\set{\Sigma_i}_{i\in I}$, the subset composition $\llbracket \Sigma_i\rrbracket_{i\in I'}= (X, U, \mathcal{U}, W, \mathcal{W}, f)$ and 
a global time-sampling parameter $\tau$. 
Then we can apply \Def~\ref{def:control-sys-disctime} and \Def~\ref{def:abstract system} to each control system $\Sigma_i$ to construct the corresponding 
metric systems $\ContSysDT{\Sigma_i}$ and $\ContSysAi{\Sigma_i}$. To be able to do that, we need to define $\Wt_i$ for all $i\in I$ s.t. Ass.~\ref{ass:Wt} holds.

\begin{lemma}\label{lem:Wt}
 Let $\set{\Sigma_i}_{i\in I}$ be a set of compatible control systems and $\set{\ContSysAi{\Sigma_i}}_{i\in I}$ the set of abstract metric systems, each induced by $\Sigma_i$, respectively, where
 \begin{equation}\label{equ:Wti}
  \Wt_i=\prod_{j \in \nbr_{\mathcal{I}}(i)} \StateA{j}.
 \end{equation}
If all local quantization parameters $\set{\varepsilon_i}_{i\in I}$ and $\set{\StateQnt{i}}_{i\in I}$ fulfill \eqref{equ:tildevarepsilon 1} and $\eta_i\leq\varepsilon_i$ for all $i\in I$ then \eqref{equ:Vtildeeps} in Ass.~\ref{ass:Wt} holds for every $i\in I$ w.r.t.\ the metric defined in \eqref{equ:defvmetric_composed}.
\end{lemma}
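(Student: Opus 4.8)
The plan is to verify the two inequalities of \eqref{equ:Vtildeeps} directly, constructing the quantized disturbance block by block, one block per neighbor $j\in\nbr_{\mathcal{I}}(i)$. First I would fix $i\in I$ and an arbitrary $w\in W_i=\prod_{j\in\nbr_{\mathcal{I}}(i)}X_j$, written as $w=\prod_{j\in\nbr_{\mathcal{I}}(i)}\set{x_j}$ with $x_j\in X_j=\real{n_j}$. For each neighbor $j$ I would invoke the covering property of the grid $\StateA{j}=\DiscSpace{X_j}{\StateQnt{j}}$ recorded just after its definition: the rectangles $\mathbb{B}_{\StateQnt{j}}(q)$ centered at the grid points $q\in\DiscSpace{X_j}{\StateQnt{j}}$ cover $\real{n_j}$, so there is a grid point $\tilde{x}_j\in\StateA{j}$ with $d_j(x_j,\tilde{x}_j)=\norm{x_j-\tilde{x}_j}\leq\norm{\StateQnt{j}}\leq\varepsilon_j$, where the last step uses the hypothesis $\StateQnt{j}\leq\varepsilon_j$. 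Setting $\tilde{w}:=\prod_{j\in\nbr_{\mathcal{I}}(i)}\set{\tilde{x}_j}$ then yields an element of $\Wt_i=\prod_{j\in\nbr_{\mathcal{I}}(i)}\StateA{j}$ by \eqref{equ:Wti}, which is countable since each $\StateA{j}$ is countable and $\nbr_{\mathcal{I}}(i)$ is finite.

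It then remains to check the two requirements of \eqref{equ:Vtildeeps} for this $\tilde{w}$. For the entrywise (vector) bound, the definition \eqref{equ:defvmetric_composed} makes the $j$-th block of $\vmetric(w,\tilde{w})$ equal to $d_j(x_j,\tilde{x}_j)\leq\varepsilon_j$, so $\vmetric(w,\tilde{w})\leq\prod_{j\in\nbr_{\mathcal{I}}(i)}\set{\varepsilon_j}=\tilde{\varepsilon}_i$ holds componentwise, the final equality being \eqref{equ:tildevarepsilon 1}. For the scalar bound, since $w-\tilde{w}$ is exactly the stacked vector of the blocks $x_j-\tilde{x}_j$, the infinity norm collapses to $\norm{w-\tilde{w}}=\max_j\norm{x_j-\tilde{x}_j}=\max_j d_j(x_j,\tilde{x}_j)=\norm{\vmetric(w,\tilde{w})}$, so $\norm{w-\tilde{w}}\leq\norm{\vmetric(w,\tilde{w})}$ holds (in fact with equality). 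This would establish \eqref{equ:Vtildeeps} for every $i$.

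The only step that needs genuine care is this second inequality, where the true norm on $W_i\subseteq\real{p}$ must be reconciled with the vector-valued metric $\vmetric$. I expect it to go through cleanly precisely because each component metric $d_j$ is an infinity norm and the block decomposition of $w$ matches the block decomposition of $\vmetric$, so both sides reduce to the same maximum over neighbors; beyond consistent bookkeeping of the block and coordinate indices I anticipate no real obstacle.
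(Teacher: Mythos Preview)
Your proposal is correct and follows essentially the same route as the paper's own proof: pick $w_i=\prod_{j\in\nbr_{\mathcal{I}}(i)}\set{x_j}$, use the covering property of each grid $\DiscSpace{X_j}{\StateQnt{j}}$ to find $\tilde{x}_j$ with $\norm{x_j-\tilde{x}_j}\leq\eta_j\leq\varepsilon_j$, and then read off both parts of \eqref{equ:Vtildeeps} from the definitions of $\vmetric$ and $\tilde{\varepsilon}_i$. The paper also observes, as you do, that the second inequality in \eqref{equ:Vtildeeps} actually holds with equality because the infinity norm of the stacked block vector coincides with the infinity norm of the vector of blockwise norms.
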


\begin{proof}
Pick any $i\in I$, $w_i\in W_i$ and observe that $w_i=\prod_{j \in \nbr_{\mathcal{I}}(i)} \set{x_{j}}$. By the choice of $\StateA{j}$ as $\DiscSpace{X_j}{\StateQnt{j}}$ we furthermore know that for any $x_j$ there exists $\hat{x}_j$ \st $\norm{x_j-\hat{x}_j}\leq\eta_j\leq\varepsilon_j$. Now recall that  $\Wt_i=\prod_{j \in \nbr_{\mathcal{I}}(i)} \StateA{j}=\prod_{j \in \nbr_{\mathcal{I}}(i)} \DiscSpace{X_j}{\StateQnt{j}}$. 
Using the definition of $\tilde{\varepsilon}_i$ in \eqref{equ:tildevarepsilon 1} and $\vmetric$ in \eqref{equ:defvmetric_composed} we therefore know that for any $w_i\in W_{i}$ there exists $\tilde{w}_i\in \Wt_i$ s.t. $\vmetric(w_i,\wt_i)=\prod_{j \in \nbr_{\mathcal{I}}(i)} \set{\norm{x_j - \hat{x}_j}} \leq\prod_{j \in \nbr_{\mathcal{I}}(i)} \set{\varepsilon_j}=\tilde{\varepsilon}_i$. 
Furthermore, $\norm{w_i-\wt_i} = \norm{\prod_{j\in\nbr_{\mathcal{I}}(i)}\set{x_j - \hat{x}_j}} = \norm{\prod_{j\in\nbr_{\mathcal{I}}(i)}\set{\norm{x_j - \hat{x}_j}}} = \norm{\vmetric(w_i,\wt_i)}$. 
\end{proof}

Given \Lem~\ref{lem:Wt}, it immediately follows that the sets $\set{\ContSysDT{\Sigma_i}}_{i\in I'}$ and $\set{\ContSysAi{\Sigma_i}}_{i\in I'}$ of metric systems are again compatible. 

To prove \Thm~\ref{thm:sim-approx-single} we have additionally used \Ass~\ref{ass:psi} which essentially bounds the effect of the disturbances on the state evolution. Given the particular choice of disturbances in the network as state trajectories of neighboring systems, we can replace \Ass~\ref{ass:psi} with the following assumption.

\begin{assumption}\label{ass:psi_i}
 Let $\set{\Sigma_i}_{i\in I}$ be a set of compatible control systems, each admitting a $\dISS$ Lyapunov function 
	${\Lypv{i}}$ witnessed by $\lambda_i$, $\alphalow{i}$, $\alphahigh{i}$, $\sigmau{i}$, and $\sigmad{i}$. 
	Then there exist constants $\psi_i > 0$ s.t.\
	\begin{multline}
		\forall x_i\in X_i.\forall u_i\in U_i.\forall w_i\in W_i.z_i\in[0,\norm{\tilde\varepsilon_i}].\\ 
	 									\frac{d}{dz_i}\sigma_{d,j}(z_i)\cdot\norm{\prod_{i\in\nbr_\IntCon{}(j)} \set{f_i(x_i,u_i,w_i)}} \leq \psi_{j} \label{eq:weak interconnection}
	\end{multline}
	holds.
\end{assumption}
If \Ass~\ref{ass:psi_i} holds for a network of compatible control systems $\set{\Sigma_i}_{i\in I}$, we call this network \emph{weakly interconnected}.

We have the following obvious lemma connection \Ass~\ref{ass:psi} with \Ass~\ref{ass:psi_i}.

\begin{lemma}\label{lem:psi}
 Let $\set{\Sigma_i}_{i\in I}$ be a set of compatible control systems, each admitting a $\dISS$ Lyapunov function 
	${\Lypv{i}}$ witnessed by $\lambda_i$, $\alphalow{i}$, $\alphahigh{i}$, $\sigmau{i}$, and $\sigmad{i}$. Then \Ass~\ref{ass:psi} holds for all $\Sigma_i$ iff \Ass~\ref{ass:psi_i} holds for the network $\set{\Sigma_i}_{i\in I}$.
\end{lemma}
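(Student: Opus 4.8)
The plan is to prove the equivalence by unfolding the definition of network composition from \Sec~\ref{sec:NetworkControlSystems}, which turns the two inequalities into the same statement once the disturbance signal of each subsystem is identified with the stacked state trajectory of its neighbors. Concretely, by compatibility for composition the disturbance driving $\Sigma_j$ satisfies $\IPITraj{j}(t)=\prod_{i\in\nbr_{\mathcal{I}}(j)}\set{\StateTraj{}{i}(t)}$, where each neighbor trajectory obeys $\StateTrajDot{i}(t)=\TransFunc{i}(\StateTraj{}{i}(t),\IPETraj{i}(t),\IPITraj{i}(t))$ by \eqref{eq:state trajectory}. Differentiating componentwise gives, for almost every $t$,
\begin{equation*}
  \tfrac{d}{dt}\IPITraj{j}(t)=\prod_{i\in\nbr_{\mathcal{I}}(j)}\set{\TransFunc{i}(\StateTraj{}{i}(t),\IPETraj{i}(t),\IPITraj{i}(t))},
\end{equation*}
so that, after taking norms, the factor $\norm{\tfrac{d}{dt}\IPITraj{j}(t)}$ in \Ass~\ref{ass:psi} (instantiated for $\Sigma_j$) is pointwise equal to the factor $\norm{\prod_{i\in\nbr_{\mathcal{I}}(j)}\set{\TransFunc{i}(x_i,u_i,w_i)}}$ in \eqref{eq:weak interconnection}; the remaining factor $\tfrac{d}{dz}\sigmad{j}(z)$ is literally the same on both sides.

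With this identity in hand I would argue the two directions separately. For ``\Ass~\ref{ass:psi} for every $\Sigma_j$ $\Rightarrow$ \Ass~\ref{ass:psi_i}'', fix $j$, a tuple $(x_i,u_i,w_i)_{i\in\nbr_{\mathcal{I}}(j)}$ and an admissible $z$; I would produce an admissible disturbance $\IPITraj{j}$ and a time $t$ whose neighbor configuration realizes that tuple, so that $\tfrac{d}{dt}\IPITraj{j}(t)=\prod_i\set{\TransFunc{i}(x_i,u_i,w_i)}$, and then read off \eqref{eq:weak interconnection} directly from the $\psi_j$ supplied by \Ass~\ref{ass:psi}. For the converse, I would take any $\IPITraj{j}\in\mathcal{W}_{\tau,j}$ and any $t\in[0,\tau]$; since at time $t$ the triple $(\StateTraj{}{i}(t),\IPETraj{i}(t),\IPITraj{i}(t))$ lies in $X_i\times U_i\times W_i$ for each neighbor $i$, the displayed derivative is one of the products bounded in \eqref{eq:weak interconnection}, so the same constant $\psi_j$ witnesses \Ass~\ref{ass:psi}. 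In both cases the constants carry over unchanged.

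The only point that needs genuine care, and the step I expect to be the main obstacle, is the realizability claim used in the forward direction: one must check that every tuple $(x_i,u_i,w_i)_{i\in\nbr_{\mathcal{I}}(j)}$ quantified in \eqref{eq:weak interconnection} actually arises as an instantaneous neighbor configuration along some admissible disturbance of $\Sigma_j$. This holds because the input and disturbance signal classes of each neighbor consist of measurable essentially bounded functions into the full spaces $U_i$ and $W_i$, and the neighbor states range over all of $X_i$; hence the set of attainable vectors $\prod_i\set{\TransFunc{i}(x_i,u_i,w_i)}$ and the set of attainable derivatives $\tfrac{d}{dt}\IPITraj{j}(t)$ coincide. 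Once this is recorded the two quantifications range over exactly the same set of vectors, and the equivalence is immediate; notably, the $\dISS$ Lyapunov structure and property \eqref{eq:ISS_cond} are not used beyond providing the functions $\sigmad{j}$ that appear identically on both sides, which is why the statement is, as the text claims, essentially bookkeeping.
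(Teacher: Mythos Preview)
Your proposal is correct and follows exactly the approach of the paper: both hinge on the single identity $\tfrac{d}{dt}\IPITraj{j}(t)=\prod_{i\in\nbr_{\mathcal{I}}(j)}\set{\TransFunc{i}(x_i,u_i,w_i)}$, which the paper simply states and from which it declares the lemma ``follows directly.'' Your write-up is more careful than the paper's one-line proof in that you separate the two directions and flag the realizability point needed for the forward implication; this extra care is warranted but does not constitute a different method.
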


\begin{proof}
Follows directly from the fact that
\begin{align*}
 \frac{d\nu_i(t)}{dt}
 = \prod_{j\in\nbr_{\mathcal{I}}(i)} \left\lbrace\frac{dx_j(t)}{dt}\right\rbrace
 = \prod_{j\in \nbr_{\mathcal{I}}(i) } \set{f_j(x_j,u_j,w_j)},
\end{align*}
for each subsystem $i\in N$.
\end{proof}

\begin{remark}
Here we want to draw the attention of the reader to the importance of \Ass~\ref{ass:psi_i}.
The state trajectories of the network of continuous systems evolve continuously, whereas the abstract systems update their states only at the sampling instances. Hence in order to balance the accumulated errors during the inter-sampling periods, the abstraction process needs to be somewhat pessimistic. This problem does not appear in \cite{tazaki2008bisimilar}, where the dynamical systems are assumed to be sampled time systems and update their states together with their corresponding abstract systems.
%
\end{remark}


\subsection{Simultaneous Approximation}

Recall from \Sec~\ref{sec:MonolithicAbstraction} that under \Ass~\ref{ass:Wt} and \Ass~\ref{ass:psi} we have shown by \Thm~\ref{thm:sim-approx-single} that for any control system $\Sigma$ its corresponding metric systems $\ContSysDT{\Sigma}$ and $\ContSysAi{\Sigma}$ constructed via \Def~\ref{def:control-sys-disctime} and \Def~\ref{def:abstract system} are disturbance bisimilar if \eqref{eq:condition on StateQnt} holds for the discretization parameters involved in the abstraction process. While in the monolithic case we can freely choose all these parameters in a way that \eqref{eq:condition on StateQnt} holds, this is no longer true if we abstract a network of control systems $\set{\Sigma_i}_{i\in I}$ simultaneously. Namely, the $\tilde{\varepsilon}_i$-s depend on the precision parameters of the neighboring systems $\varepsilon_j$-s ($j\in\nbr_{\mathcal{I}}(i)$), and the sampling parameter $\tau$ has to be the same for all subsystems.

 By furthermore resolving \Ass~\ref{ass:Wt} using \Lem~\ref{lem:Wt} we see that \Thm~\ref{thm:sim-approx-single} can be generalized to networks of \emph{weakly interconnected} control systems if \eqref{eq:condition on StateQnt} and \eqref{equ:defvmetric_composed} can be fulfilled simultaneously by all discretization parameter sets involved in the abstraction process. This intuition is formalized by the following corollary, which is a direct consequence of \Thm~\ref{thm:sim-approx-single}, \Lem~\ref{lem:Wt} and \Lem~\ref{lem:psi}.

\begin{corollary}\label{cor:sim-appx}
 Let $\set{\Sigma_i}_{i\in I}$ be a set of compatible and weakly interconnected control systems that have $\dISS$ Lyapunov functions $V_i$ satisfying \eqref{eq:ISS_cond}. 
Let $\set{\ContSysDT{\Sigma_i}}_{i\in I}$ be the set of discrete-time metric systems 
induced by $\set{\Sigma_i}_{i\in I}$ and let $\set{\ContSysAi{\Sigma_i}}_{i\in I}$ be the set of abstract metric systems
  induced by $\set{\Sigma_i}_{i\in I}$ with $\Wt_i$ as in \eqref{equ:Wti}.
If all local quantization parameters $\set{\varepsilon_i}_{i\in I}$, $\set{\tilde{\varepsilon}_i}_{i\in I}$, $\set{\IPQnt{i}}_{i\in I}$ and $\set{\StateQnt{i}}_{i\in I}$ simultaneously fulfill \eqref{equ:tildevarepsilon 1} and
	\begin{align*}
		0 < \StateQnt{i} \leq \textrm{min}\left\lbrace\gammafunc{i}^{-1}\lambda_i^{-1}(1-e^{-\lambdaL{i}\tau})\left[ \lambda_i\alphalow{i}(\varepsilon_i) - \sigmau{i}(\IPQnt{i})\right.\right.\\
		\left.\left.- \sigmad{i}(\norm{\DQnt{i}}) - \psi_i\cdot\tau \right], (\alphahigh{i})^{-1}\circ\alphalow{i}(\varepsilon_i)\right\rbrace, \numberthis \label{eq:condition on StateQnt i}
	\end{align*}
then the relation
\begin{align*}
			\eqclass{i} = &\left\lbrace (q_i,\hat{q}_i)\in\StateDT{i}\times\StateA{i} \ |\ \Lypv{i}(q_i,\hat{q}_i)\leq\alphalow{i}(\varepsilon_{i}) \right\rbrace 
	\end{align*}
	is a disturbance bisimulation relation with parameters $(\varepsilon_i,\tilde{\varepsilon}_i)$ between
	  $\ContSysDT{\ContSysSymb{i}}$ and $\ContSysA{i}{\ContSysSymb{i}}$.	
\end{corollary}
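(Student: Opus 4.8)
The plan is to prove the corollary by applying \Thm~\ref{thm:sim-approx-single} separately to each subsystem $\Sigma_i$, $i\in I$; no composition result is needed at this stage, so the entire task reduces to checking that the hypotheses of that theorem are met for every $i$ under the stated coupled parameter conditions. Concretely, for each $i$ I must supply (i) a $\dISS$ Lyapunov function for $\Sigma_i$ satisfying property~\eqref{eq:ISS_cond}, (ii) a discretization $\Wt_i$ of the disturbance space for which \Ass~\ref{ass:Wt} holds, (iii) the growth bound of \Ass~\ref{ass:psi}, and (iv) the bound \eqref{eq:condition on StateQnt} on $\StateQnt{i}$. Items (i) and (iv) are already hypotheses of the corollary: each $\Sigma_i$ carries a Lyapunov function $V_i$ satisfying~\eqref{eq:ISS_cond}, and \eqref{eq:condition on StateQnt i} is exactly the instance of \eqref{eq:condition on StateQnt} written for $\Sigma_i$.

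For (ii) I would invoke \Lem~\ref{lem:Wt}, whose hypotheses are that the parameters satisfy \eqref{equ:tildevarepsilon 1} (assumed here) and that $\StateQnt{i}\leq\varepsilon_i$ for all $i$. The latter I would extract from the second argument of the minimum in \eqref{eq:condition on StateQnt i}: since \eqref{eqn:lyapunov condition 1} forces $\alphalow{i}\leq\alphahigh{i}$, monotonicity of $\kifunc$-functions gives $(\alphahigh{i})^{-1}\circ\alphalow{i}(\varepsilon_i)\leq\varepsilon_i$, hence $\StateQnt{i}\leq\varepsilon_i$ (this is the same step opening the proof of \Thm~\ref{thm:sim-approx-single}). \Lem~\ref{lem:Wt} then certifies that \Ass~\ref{ass:Wt} holds for $\Sigma_i$ with $\Wt_i$ chosen as in \eqref{equ:Wti} and with the vector metric $\vmetric$ from~\eqref{equ:defvmetric_composed}; this is also precisely what makes the induced families $\set{\ContSysDT{\Sigma_i}}_{i\in I}$ and $\set{\ContSysAi{\Sigma_i}}_{i\in I}$ compatible for composition.

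For (iii) I would use \Lem~\ref{lem:psi}: because the network is assumed \emph{weakly interconnected}, \Ass~\ref{ass:psi_i} holds, and the stated equivalence yields \Ass~\ref{ass:psi} for every $\Sigma_i$. With (i)--(iv) all in place, \Thm~\ref{thm:sim-approx-single} applies verbatim to each $\Sigma_i$ and shows that $\eqclass{i}$ is a disturbance bisimulation with parameters $(\varepsilon_i,\tilde{\varepsilon}_i)$ between $\ContSysDT{\ContSysSymb{i}}$ and $\ContSysA{i}{\ContSysSymb{i}}$, which is the claim. I do not expect a genuine technical obstacle, since the corollary is essentially a bookkeeping consequence of the two lemmas and \Thm~\ref{thm:sim-approx-single}; the only point requiring care is conceptual rather than computational, namely that \eqref{equ:tildevarepsilon 1} and \eqref{eq:condition on StateQnt i} must hold \emph{simultaneously} for all subsystems with a \emph{common} sampling time $\tau$. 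Unlike the monolithic setting of \Thm~\ref{thm:sim-approx-single}, where every discretization parameter is free, here $\tilde{\varepsilon}_i$ is pinned to the neighbors' $\varepsilon_j$ via \eqref{equ:tildevarepsilon 1} and $\tau$ is shared; the corollary simply assumes a feasible solution to this coupled system of inequalities is given, and once it is, the per-subsystem application of \Thm~\ref{thm:sim-approx-single} is immediate.
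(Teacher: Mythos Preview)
Your proposal is correct and matches the paper's approach: the paper states the corollary as ``a direct consequence of \Thm~\ref{thm:sim-approx-single}, \Lem~\ref{lem:Wt} and \Lem~\ref{lem:psi}'' without giving further detail, and you have correctly unpacked exactly how those three results combine, including the extraction of $\StateQnt{i}\leq\varepsilon_i$ needed to invoke \Lem~\ref{lem:Wt}. Your closing remark about the coupled feasibility of the parameter system is also apt and anticipates the role of \Thm~\ref{thm:cond:sim-appx}.
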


Given this result on simultaneous approximation we still need to answer when \eqref{equ:tildevarepsilon 1} and \eqref{eq:condition on StateQnt i} can be fulfilled simultaneously in a network of weakly interconnected control systems. This brings us to our third main result.

\begin{theorem}\label{thm:cond:sim-appx}
 Let $\set{\Sigma_i}_{i\in I}$ be a set of compatible and weakly interconnected control systems that have $\dISS$ Lyapunov functions $V_i$ satisfying \eqref{eq:ISS_cond}.
 Suppose for all $i\in I'$ there exist $\KInf$ functions $\vartheta_i$ and constants $c_{i,\alpha}\in \mathbb{R}_{>0}$ and $c_{i,\sigma}\in\mathbb{R}_{>0}$ s.t.
        \begin{enumerate}
		\item $\forall r\in\mathbb{R}_{\geq 0}$, $\alphalow{i}\left({r}/{\max\set{1,|\nbr_{\mathcal{I}'}(i)|}}\right)\geq c_{i,\alpha}\vartheta_i(r)$, and \label{assume a}
		\item $\forall r\in\mathbb{R}_{\geq 0}$, $\nbr_{\mathcal{I}'}(i)\neq \emptyset \Rightarrow \sigmad{i}(r)\leq c_{i,\sigma}\vartheta_i(r)$ \label{assume b}
	\end{enumerate}
 and there exists $s\in \mathbb{R}_{>0}^{|I'|}$ s.t. 
	\begin{align}\label{equ:small gain kind}	
	(-A+B)s < 0
	\end{align} holds,
	where $A\in \mathbb{R}^{N\times N}$ is a diagonal matrix with $A(i,i) = \lambda_i\cdot c_{i,\alpha}$ and $B \in \mathbb{R}^{N\times N}$ is s.t.\ $B(i,j) = c_{i,\sigma}$ if $j\in \nbr_{\mathcal{I}'}(i)$ and $B(i,j) = 0$ otherwise. Note that we assume that the network does not have any self loop, hence $B(i,i) = 0$ for all $i$.
	
	Then there exist a global time sampling parameter $\tau$ and sets of local quantization parameters $\set{\varepsilon_i}_{i\in I}$, $\set{\tilde{\varepsilon}_i}_{i\in I}$, $\set{\IPQnt{i}}_{i\in I}$ and $\set{\StateQnt{i}}_{i\in I}$ s.t. \eqref{equ:tildevarepsilon 1} and \eqref{eq:condition on StateQnt i} can be satisfied simultaneously.
\end{theorem}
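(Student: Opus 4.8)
The plan is to treat \eqref{equ:tildevarepsilon 1} as a \emph{definition} of $\DQnt{i}$ in terms of the $\varepsilon_j$, and to reduce \eqref{eq:condition on StateQnt i} to a single strict inequality per subsystem. A value $\eta_i>0$ exists as soon as the first argument of the $\min$ in \eqref{eq:condition on StateQnt i} is positive (the second argument $(\alphahigh{i})^{-1}\circ\alphalow{i}(\varepsilon_i)$ is automatically positive), and the prefactor $\gammafunc{i}^{-1}\lambda_i^{-1}(1-e^{-\lambda_i\tau})$ is positive for every $\tau>0$. Hence it suffices to exhibit $\set{\varepsilon_i}$, $\set{\IPQnt{i}}$ and one global $\tau$ with
\[
\lambda_i\alphalow{i}(\varepsilon_i)>\sigmau{i}(\IPQnt{i})+\sigmad{i}(\norm{\DQnt{i}})+\psi_i\tau\quad\text{for all }i .
\]
Since $\sigmau{i}\in\kifunc$ and $\psi_i\tau$ vanish as $\IPQnt{i},\tau\to 0$, I would first secure the stronger \emph{static} gap $\lambda_i\alphalow{i}(\varepsilon_i)>\sigmad{i}(\norm{\DQnt{i}})$ with strict slack, and only afterwards shrink $\IPQnt{i}$ and $\tau$ to absorb the two vanishing terms.

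The core step is to build the $\varepsilon_i$ from the small-gain scaling vector $s$. Recall $\norm{\DQnt{i}}=\max_{j\in\nbr_{\IntCon}(i)}\varepsilon_j$ because $\norm{\cdot}$ is the infinity norm. Writing $n_i:=\max\set{1,\size{\nbr_{\IntCon'}(i)}}$ and fixing any common scale $\mu>0$, I would set $\varepsilon_i:=\tfrac{1}{n_i}\vartheta_i^{-1}(\mu s_i)$. Assumption~\ref{assume a} with $r=\vartheta_i^{-1}(\mu s_i)$ yields $\alphalow{i}(\varepsilon_i)\ge c_{i,\alpha}\vartheta_i(\vartheta_i^{-1}(\mu s_i))=c_{i,\alpha}\mu s_i$, so the left side is at least $\lambda_i c_{i,\alpha}\mu s_i$. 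For the right side, each internal neighbour obeys $\varepsilon_j=\tfrac{1}{n_j}\vartheta_j^{-1}(\mu s_j)\le\vartheta_j^{-1}(\mu s_j)$, so by monotonicity of $\vartheta_i$ and Assumption~\ref{assume b},
\[
\sigmad{i}(\norm{\DQnt{i}})\le c_{i,\sigma}\vartheta_i\!\Big(\max_{j\in\nbr_{\IntCon'}(i)}\varepsilon_j\Big)\le c_{i,\sigma}\,\mu\max_{j\in\nbr_{\IntCon'}(i)}s_j\le c_{i,\sigma}\,\mu\!\!\sum_{j\in\nbr_{\IntCon'}(i)}\!\!s_j .
\]
The remaining strict inequality $\lambda_i c_{i,\alpha}\mu s_i>c_{i,\sigma}\mu\sum_{j\in\nbr_{\IntCon'}(i)}s_j$ is exactly the $i$-th row of \eqref{equ:small gain kind} scaled by $\mu$, valid for every $\mu>0$. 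This is where the factor $\max\set{1,\size{\nbr_{\IntCon'}(i)}}$ in Assumption~\ref{assume a} and the row-sum structure of $B$ act in tandem: the reduction $1/n_j\le 1$ collapses the neighbour terms onto the common scale $\mu s_j$, so that all the per-subsystem gains $\vartheta_i$ cancel at once.

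With $\set{\varepsilon_i}$ fixed, I would set $g_i:=\lambda_i\alphalow{i}(\varepsilon_i)-\sigmad{i}(\norm{\DQnt{i}})>0$ and, using that $I$ is finite, pick each $\IPQnt{i}$ with $\sigmau{i}(\IPQnt{i})<g_i/2$ and one $\tau<\min_i g_i/(2\psi_i)$. These choices render the bracket in \eqref{eq:condition on StateQnt i} positive, so a valid $\eta_i>0$ (and, in fact, one with $\eta_i\le\varepsilon_i$, as required by \Lem~\ref{lem:Wt}) exists for every $i$, while $\DQnt{i}$ defined through \eqref{equ:tildevarepsilon 1} is consistent by construction. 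This produces parameters meeting \eqref{equ:tildevarepsilon 1} and \eqref{eq:condition on StateQnt i} simultaneously, so that \Cor~\ref{cor:sim-appx} becomes applicable.

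The main obstacle is precisely the reconciliation in the second paragraph: the disturbance mismatch is measured in the vector-valued (max) metric $\vmetric$ and the stability margins enter through the \emph{per-subsystem} nonlinear comparison functions $\vartheta_i$, whereas the hypothesis is a \emph{linear} matrix inequality on $s$. The delicate point is that a subsystem's $\varepsilon_j$ enters its neighbours' conditions through $\vartheta$-terms of \emph{different} subsystems, so a naive choice leaves uncancelled compositions $\vartheta_i\circ\vartheta_j^{-1}$; the construction $\varepsilon_i=\tfrac{1}{n_i}\vartheta_i^{-1}(\mu s_i)$ combined with Assumptions~\ref{assume a}--\ref{assume b} is exactly what eliminates them and converts the cyclic coupling into the single spectral condition \eqref{equ:small gain kind}. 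Finally, for rows with $\nbr_{\IntCon'}(i)=\emptyset$ but nonempty external neighbourhood, Assumption~\ref{assume b} is vacuous; there I would treat the external precisions as additional free parameters chosen small enough that $\sigmad{i}(\norm{\DQnt{i}})$ is absorbed just like the $\sigmau{i}(\IPQnt{i})$ term.
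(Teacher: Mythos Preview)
Your overall strategy coincides with the paper's: parametrise $\varepsilon_i=\tfrac{1}{n_i}\vartheta_i^{-1}(\mu s_i)$ (the paper writes $s_i=\vartheta_i(\varepsilon_i')$ and $\varepsilon_i=\varepsilon_i'/n_i$, which is the same choice with $\mu=1$), use Assumptions~\ref{assume a}--\ref{assume b} to convert each row of \eqref{equ:small gain kind} into a strictly positive static gap $\lambda_i\alphalow{i}(\varepsilon_i)-\sigmad{i}(\norm{\DQnt{i}})>0$, and then absorb the remaining terms $\sigmau{i}(\IPQnt{i})$ and $\psi_i\tau$ by shrinking $\IPQnt{i}$ and $\tau$. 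Your treatment of this last step and of the role of $n_i$ is actually more explicit than the paper's.

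There is, however, one step that does not go through as written. In the chain
\[
\sigmad{i}(\norm{\DQnt{i}})\le c_{i,\sigma}\,\vartheta_i\!\Big(\max_{j\in\nbr_{\IntCon'}(i)}\varepsilon_j\Big)\le c_{i,\sigma}\,\mu\max_{j\in\nbr_{\IntCon'}(i)}s_j,
\]
the second inequality needs $\vartheta_i(\varepsilon_j)\le\mu s_j$ for the maximising neighbour $j$. From $\varepsilon_j\le\vartheta_j^{-1}(\mu s_j)$ and monotonicity of $\vartheta_i$ you obtain only $\vartheta_i(\varepsilon_j)\le\vartheta_i\!\bigl(\vartheta_j^{-1}(\mu s_j)\bigr)$, and this is $\le\mu s_j$ only if $\vartheta_i\le\vartheta_j$ on the relevant range. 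The hypotheses give no comparison between $\vartheta_i$ and $\vartheta_j$ for $i\neq j$, so the compositions $\vartheta_i\circ\vartheta_j^{-1}$ that you correctly identify as the obstacle are \emph{not} eliminated by the construction; a counterexample is $\vartheta_i(r)=r^2$, $\vartheta_j(r)=r$. The paper's derivation passes over the same point (its chain of implications mixes the indices on $\sigmad{}$ and $c_{\cdot,\sigma}$), and the argument is clean only under the additional, unstated assumption that all $\vartheta_i$ can be taken equal---which is precisely how the example in \Sec~\ref{sec:Example} is set up.
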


\begin{proof}
	Let $s = [\vartheta_i(\varepsilon'_1) \ \ldots \ \vartheta_i(\varepsilon'_{|I'|})]^T$ be one satisfying assignment of \eqref{equ:small gain kind} for some $\set{\varepsilon'_i}_{i\in I'}$, and consider the following derivation: 
	\begin{align*}	
	&(-A+B)s < 0 
	\Rightarrow (A-B)s > 0 \\
	&\Rightarrow \lambda_ic_{i,\alpha}\vartheta_i(\varepsilon'_i) - \sum_{j\in \nbr_{\mathcal{I}}(i)} c_{j,\sigma}\vartheta_j(\varepsilon'_j) > 0 \\
	&\Rightarrow \lambda_i\alphalow{i}(\varepsilon'_i/ |\nbr_{\mathcal{I}'}(i)|) - \sum_{j\in \nbr_{\mathcal{I}}(i)} \sigmad{j}(\varepsilon'_j) > 0 \\
	&\Rightarrow \lambda_i\alphalow{i}(\varepsilon'_i/ |\nbr_{\mathcal{I}'}(i)|) - \sigmad{j}\left(\sum_{j\in \nbr_{\mathcal{I}}(i)}(\varepsilon'_j/ |\nbr_{\mathcal{I}'}(i)|)\right) > 0. 
	\end{align*}
	Then by picking $\varepsilon_i = \varepsilon'_i/|\nbr_\mathcal{I}(i)|$, one can ensure that $\lambda_i\alphalow{i}(\varepsilon_i) - \sigmad{j}\left( \norm{\tilde{\varepsilon}}\right) > 0$ holds for all $i\in I'$. As a consequence, one can find suitable $\eta_i, \tau, \omega_i \in \mathbb{R}_{>0}$ which satisfy \eqref{eq:condition on StateQnt i}.
\end{proof}


\Thm~\ref{thm:cond:sim-appx} gives a generalized version of small-gain like conditions (see \cite[\Page 217]{khalil1996noninear}) for the existence of a solution of the simultaneous approximation problem. If \Inq~\ref{equ:small gain kind} is unsatisfiable for all $s$, then there exists $i\in N$ s.t. the $i$-th row of $B$ dominates the $i$-th row of $A$. Intuitively this means that there exists at least one system in the network which is too sensitive to its disturbances. In other words, the system's own dynamics are too weak to counteract its disturbances. This makes the problem of simultaneous approximation infeasible.

The conditions in \Thm~\ref{thm:cond:sim-appx} can be simplified significantly if the dynamics of the control systems $\Sigma_i$ satisfy some additional properties: 
%
\begin{inparaenum}[(a)]
\item If for all $i$, $\sigmad{i}$ satisfies the triangular inequality, then Condition~\eqref{assume a} in \Thm~\ref{thm:cond:sim-appx} can be replaced by the weaker condition $\forall r\in\mathbb{R}_{\geq 0}$, $\alphalow{i}\left({r}\right)\geq c_{i,\alpha}\vartheta_i(r)$. 
\item If the control systems $\set{\Sigma_i}_{i\in I'}$ are linear then $\alphalow{i}$ and $\sigmad{i}$ are constants, and in that case one can replace $c_{i,\alpha}$ and $c_{i,\sigma}$ by the constants $\alphalow{i}$ and $\sigmad{i}$ respectively, and use the linear function $\vartheta_i(r) = r$.
\end{inparaenum} 

\begin{remark}\label{rem:small gain}
	Note that \eqref{equ:small gain kind} holds if $\lambda_{max}(A^{-1}B) < 1$ \cite[\Lem~3.1]{dashkovskiy2011small}, where $\lambda_{max}(\cdot)$ represents the maximum eigenvalue. For a two system network where both systems are connected to each other s.t.\ a cycle is formed, the eigenvalues of the matrix product $A^{-1}B$ are given by $\pm \sqrt{\frac{c_{1,\sigma}c_{2,\sigma}}{\lambda_1\lambda_2 c_{1,\alpha}c_{2,\alpha}}}$. Then by setting $c_{i,\sigma} = d_i$, $\lambda_i = l_i$ and $c_{i,\alpha} = 1$ for $i \in \set{1,2}$, we obtain the inequality $\frac{d_1d_2}{l_1l_2} < 1$ as a sufficient condition for \eqref{equ:small gain kind}, which is a small gain type condition similar to the ones presented in \cite[\Thm~2]{girard2013composition} and \cite{rungger2015compositional} in the context of composition of bisimulation and simulation functions of two interconnected subsystems, respectively.
\end{remark}

We will illustrate \Cor~\ref{cor:sim-appx} and \Thm~\ref{thm:cond:sim-appx} by an example in \Sec~\ref{sec:Example}.


 \subsection{Composition of Approximations}
 
We have discussed in \Sec~\ref{sec:CA:Assumptions} that the sets $\set{\ContSysDT{\Sigma_i}}_{i\in I'}$ and $\set{\ContSysAi{\Sigma_i}}_{i\in I'}$ of metric systems are compatible.
Therefore, combining the results from \Thm~\ref{thm:composition of ADB MS}, \Cor~\ref{cor:sim-appx} and \Thm~\ref{thm:cond:sim-appx} leads to the following corollary.

\begin{corollary}\label{cor:comp-appx}
		Given the preliminaries of \Thm~\ref{thm:cond:sim-appx} and $I'\subseteq I$, let $\llbracket\ContSysDT{\Sigma_i}\rrbracket_{i\in I'}$ and $\llbracket\ContSysAi{\Sigma_i}\rrbracket_{i\in I'}$
		be systems composed from the sets $\set{\ContSysDT{\Sigma_i}}_{i\in I}$ and $\set{\ContSysAi{\Sigma_i}}_{i\in I}$, respectively. Then the relation
			\begin{align}\label{equ:eqclass}
				\eqclass{} = &\lbrace (\transpose{ q^T_{1}\delimit\ldots\delimit q^T_{|I'|}}, \transpose{\hat{q}^T_{1}\delimit\ldots\delimit\hat{q}^T_{|I'|}})\in\StateDT{}\times\StateAb{} \ |\notag\\
				&\quad(q_i,\hat{q}_i)\in\eqclass{i}, \forall i\in I'
				)\rbrace
			\end{align} 
		 is a disturbance bisimulation relation between $\llbracket\ContSysDT{\Sigma_i}\rrbracket_{i\in I'}$ and $\llbracket\ContSysAi{\Sigma_i}\rrbracket_{i\in I'}$ with parameters 
		 		\begin{align*}
				\textstyle\varepsilon{} \textstyle= \norm{\prod_{i\in I'}\set{\varepsilon_i}} \text{ and }~ 
				\tilde{\varepsilon}\textstyle=\prod_{j \in \nbr_{\mathcal{I}}(I')} \set{\varepsilon_j}.
			\end{align*}
	\end{corollary}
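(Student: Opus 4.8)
The plan is to assemble the corollary as a chain of the three preceding results, with no new computation required beyond matching hypotheses. First I would invoke \Thm~\ref{thm:cond:sim-appx}, whose hypotheses coincide exactly with the preliminaries assumed here---namely that $\set{\Sigma_i}_{i\in I}$ are compatible and weakly interconnected with $\dISS$ Lyapunov functions satisfying \eqref{eq:ISS_cond}, together with the small-gain condition \eqref{equ:small gain kind}. This guarantees the existence of a single global sampling parameter $\tau$ and local quantization parameters $\set{\varepsilon_i}_{i\in I}$, $\set{\tilde{\varepsilon}_i}_{i\in I}$, $\set{\IPQnt{i}}_{i\in I}$, and $\set{\StateQnt{i}}_{i\in I}$ that simultaneously satisfy \eqref{equ:tildevarepsilon 1} and \eqref{eq:condition on StateQnt i}.

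Second, with these parameters fixed I would apply \Cor~\ref{cor:sim-appx}. Its premises are met precisely because the chosen parameters fulfill \eqref{equ:tildevarepsilon 1} and \eqref{eq:condition on StateQnt i}, so its conclusion yields that for each $i\in I'$ the relation $\eqclass{i}$ is a disturbance bisimulation with parameters $(\varepsilon_i,\tilde{\varepsilon}_i)$ between the discrete-time metric system $\ContSysDT{\Sigma_i}$ and the abstract metric system $\ContSysAi{\Sigma_i}$ constructed with $\Wt_i$ as in \eqref{equ:Wti}.

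Third, I would recall from \Sec~\ref{sec:CA:Assumptions}, which follows from \Lem~\ref{lem:Wt}, that the two families $\set{\ContSysDT{\Sigma_i}}_{i\in I}$ and $\set{\ContSysAi{\Sigma_i}}_{i\in I}$ are each compatible for composition, so that in particular the disturbance space of each abstract component equals the product of the neighboring abstract state spaces. Combining this with the relation $\tilde{\varepsilon}_i=\prod_{j\in\nbr_{\mathcal{I}}(i)}\set{\varepsilon_j}$ supplied by \eqref{equ:tildevarepsilon 1}, all three hypotheses of \Thm~\ref{thm:composition of ADB MS} are in place: compatible families, componentwise disturbance bisimilarity, and the $\tilde{\varepsilon}_i$ expressed through the neighbors' $\varepsilon_j$. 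Applying that theorem over the subset $I'$ then gives that the relation $\eqclass{}$ of \eqref{equ:eqclass}---which has exactly the product form of \eqref{equ:eqclass_subsystems 1}---is a disturbance bisimulation between $\llbracket\ContSysDT{\Sigma_i}\rrbracket_{i\in I'}$ and $\llbracket\ContSysAi{\Sigma_i}\rrbracket_{i\in I'}$ with parameters $\varepsilon=\norm{\prod_{i\in I'}\set{\varepsilon_i}}$ and $\tilde{\varepsilon}=\prod_{j\in\nbr_{\mathcal{I}}(I')}\set{\varepsilon_j}$, as claimed.

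Since the statement is a pure composition of earlier results, there is no genuine mathematical obstacle; the only care needed is bookkeeping. Concretely, one must verify that the parameter sets produced by \Thm~\ref{thm:cond:sim-appx} feed correctly into \Cor~\ref{cor:sim-appx}, and that the resulting componentwise bisimilar metric systems satisfy the exact compatibility and $\tilde{\varepsilon}_i$-consistency demanded by \Thm~\ref{thm:composition of ADB MS}. The subtle point to check is that the abstract side uses the disturbance discretization $\Wt_i$ of \eqref{equ:Wti}, so that the composed abstract disturbance space genuinely matches the product of neighboring abstract state spaces, which is what makes the two composed systems comparable under the same relation structure.
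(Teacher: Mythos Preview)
Your proposal is correct and matches the paper's approach exactly: the paper presents this corollary without a separate proof, stating only that it follows by combining \Thm~\ref{thm:composition of ADB MS}, \Cor~\ref{cor:sim-appx}, and \Thm~\ref{thm:cond:sim-appx}, which is precisely the chain you have spelled out.
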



Recall that in the special case $I'=I$ the composed system replaces the overall network without extra external disturbances. In this case it is easy to see that the relation in Corollary~\ref{cor:comp-appx} simplifies to a usual bisimulation relation.

\begin{corollary}
 Given the premises of Corollary~\ref{cor:comp-appx} and that $I'=I$, the relation $\eqclass{}$ in \eqref{equ:eqclass} is an $\varepsilon$-approximate bisimulation relation between $\llbracket\ContSysDT{\Sigma_i}\rrbracket_{i\in I}$ and $\llbracket\ContSysAi{\Sigma_i}\rrbracket_{i\in I}$.
\end{corollary}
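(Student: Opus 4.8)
The plan is to show that, once the external disturbances vanish, the disturbance-bisimulation structure of $\eqclass{}$ degenerates into the defining clauses of an ordinary $\varepsilon$-approximate bisimulation; the hypothesis $I'=I$ is exactly what removes those disturbances. First I would recall that the composition constructions of \Sec~\ref{sec:Comp:Metric} and \Sec~\ref{sec:NetworkControlSystems} set $W:=\set{0}$ precisely when $I'=I$, since then the external-neighbour set $\nbr_{\mathcal{I}}(I')$ is empty and no coupling survives as an external input. Consequently both composed systems $\llbracket\ContSysDT{\Sigma_i}\rrbracket_{i\in I}$ and $\llbracket\ContSysAi{\Sigma_i}\rrbracket_{i\in I}$ are undisturbed metric systems, and the accompanying parameter $\tilde{\varepsilon}=\prod_{j\in\nbr_{\mathcal{I}}(I)}\set{\varepsilon_j}$ collapses to the trivial empty vector.

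Next I would simply specialise \Cor~\ref{cor:comp-appx}, which already asserts that $\eqclass{}$ is a disturbance bisimulation with parameters $(\varepsilon,\tilde{\varepsilon})$ between the two composed systems; what remains is to read off \Def~\ref{def:DisturbanceBisimulation} in the undisturbed case. Because $W=\set{0}$, each disturbance set $\mathcal{W}_{\tau}$ contains only the zero trajectory, so in clauses (b) and (c) the universally quantified $\IPITraj{1}$ and $\IPITraj{2}$ are forced to equal this unique trajectory, and the side condition $\vmetric(\IPITraj{1}(0),\IPITraj{2}(0))\leq\tilde{\varepsilon}$ holds automatically because both arguments are $0$. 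Hence the inner ``for all disturbances'' quantifier is vacuous and can be dropped.

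With the disturbances eliminated, clause (a) reads $d(x_1,x_2)\leq\varepsilon$, while clauses (b) and (c) say that for every input $\IPETraj{1}$ of the first system there is an input $\IPETraj{2}$ of the second (and symmetrically) whose successor states are again $\eqclass{}$-related. These are exactly the back-and-forth conditions of an $\varepsilon$-approximate bisimulation, so the corollary follows. There is no substantial obstacle here --- the real content lives in \Cor~\ref{cor:comp-appx} --- and the only point I would treat carefully is confirming that the external-neighbour product is genuinely empty, so that the disturbance set is an honest singleton and the tolerance constraint becomes automatic, which is what lets the adversarial disturbance quantifier disappear rather than merely shrink to a tolerance ball.
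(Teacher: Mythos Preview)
Your proposal is correct and follows precisely the route the paper takes: the paper itself offers no formal proof but only the remark that, with $I'=I$, the composed system has no external disturbances and the disturbance bisimulation from \Cor~\ref{cor:comp-appx} therefore collapses to an ordinary $\varepsilon$-approximate bisimulation. Your argument spells out exactly this observation, including the vacuity of the disturbance quantifier once $W=\set{0}$, so there is nothing to add.
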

\section{Decentralized Controllers}\label{sec:control}
Finally, we discuss how our compositional approach leads to a decentralized controller synthesis methodology.

Let  $\Sigma$ be a control system and recall that $\ContSysDT{\Sigma}$ and $\ContSysA{}{\Sigma{}}$ are its induced metric systems as defined in Def.~\ref{def:control-sys-disctime} and Def.~\ref{def:abstract system}, respectively, which are related via the disturbance bisimulation relation $\eqclass{}$ in \eqref{eqn:local simulation relation} under the given assumptions. For these systems we denote by $\Xi$, $\Xi_\tau$ and $\Xi_{\AllQnt{}}$ the sets containing all their state trajectories. 
By slightly abusing notation, we furthermore assume in this section that $\ContSysA{}{\Sigma{}}$ was constructed over compact sets $X'\subset X$ and $U'\subset U$ as discussed in Rem.~\ref{rem:CompactStateset} and is therefore a finite metric system.

There are various types of specifications that can be used for controller synthesis. We assume in this paper that the specification is given as a subset $\varphi\subseteq\Xi$ of desired \emph{continuous} trajectories only taking values in the compact subset $X'\subset X$ of the state space. Given this set, we will design a controller in three steps. First, the specification $\varphi$ is abstracted to its time-discrete and abstract counterparts $\varphi_\tau$ and $\varphi_\AllQnt{}$. Second, a control function $\hat{f}^{c}$ is synthesized for the abstract metric system $\ContSysA{}{\Sigma{}}$ w.r.t.\ the specification 
$\varphi_\AllQnt{}$ forming the abstract closed-loop metric system $\ContSysAc{}{\Sigma{}}$ whose trajectories are guaranteed to be contained in $\varphi_\AllQnt{}$. Third, the closed loop system $\ContSysAc{}{\Sigma{}}$ is composed with the time-discrete metric system $\ContSysDT{\Sigma}$ using the constructed disturbance bisimulation $\eqclass{}$ to obtain a time-discrete closed loop system. Due to the properties of $\eqclass{}$, all trajecotries generated by this closed loop are guaranteed to be contained in $\varphi_\tau$. 
In addition to that, we can extend this soundness result to all continuous trajectories generated by this closed loop and show that they are contained in the set $\varphi$.


\subsection{Abstracting the Specification}


Intuitively, the outlined abstraction based controller synthesis only provides a controller for the \emph{continuous} control system $\Sigma$ w.r.t.\ the original specification $\varphi$ if all continuous trajectories fulfilling $\varphi_\tau$ at sampling instances also fulfill $\varphi$ in inter-sampling periods. We can make this underlying assumption explicit by assuming that the vector field $f$ in the control system $\Sigma$ is bounded.

\begin{assumption}\label{ass:fbounded}
Let $\Sigma$ be a control system with $\dISS$ Lyapunov function $V$ satisfying \eqref{eq:ISS_cond}. 
	We assume there exists a constant $\chi > 0$ \st
	\begin{align}
		\norm{f(x,u,w)} \leq \chi
	\end{align}
	holds for any $x\in X$, $u\in U$, and $w\in W$.
\end{assumption}

It should be noted that \Ass~\ref{ass:fbounded} implies the restriction posed by \Ass~\ref{ass:psi_i} on $f$. In other words, $\psi$ in \Ass~\ref{ass:psi_i} can always be calculated from a given $\chi$ in \Ass~\ref{ass:fbounded} and the maximum derivative of $\sigma_d$ on the compact interval $[0,\norm{\tilde\varepsilon}]$.

Given \Ass~\ref{ass:fbounded} and a trajectory $\xi_\tau\in\Xi_\tau$ we define the set of trajectories $\chi$-\emph{close} to $\xi_\tau$ by
\begin{align}\label{equ:psi_envelope}
[\xi_\tau]_{\chi}=\left\{\right.\xi'\in\Xi& \mid \forall t\in\real{},\,\,k = \lfloor t/\tau+1/2\rfloor~.~ \nonumber\\
&\left.|| \xi'(t)-\xi_\tau(k)||\leq\chi \mathsf{min}(t,\tau-t)\right\},
\end{align}
where $\lfloor \cdot \rfloor$ is the floor function (largest integer not greater than its argument).
Given a set of desired continuous trajectories $\varphi\subseteq\Xi$, we can define $\varphi_\tau$ s.t.\ it contains all time sampled versions $\xi_\tau$ of trajectories $\xi\in\varphi$ whose $\chi$-envelope $[\xi_\tau]_{\chi}$ is also in $\varphi$, i.e.,
\begin{equation}\label{equ:varphi_tau}
 \varphi_\tau=\set{\xi_\tau\in\Xi_\tau \mid [\xi_\tau]_\chi\subseteq\varphi}.
\end{equation}

Similarly, given a trajectory $\xi_{\AllQnt{}}\in\Xi_{\AllQnt{}}$ and a disturbance bisimulation relation $\eqclass{}$ between $\ContSysDT{\Sigma}$ and $\ContSysA{}{\Sigma{}}$, we define the set of trajectories $\varepsilon$-close to $\xi_\AllQnt{}$ w.r.t.\ $\eqclass{}$ as
\begin{align}\label{equ:eqclass_envelope}
 [\xi_\AllQnt{}]_{\eqclass{}}=\set{\xi_\tau\in\Xi_\tau \mid
 \forall k\in\mathbb{N}~.~ (\xi_\tau(k),\xi_\AllQnt{i}(k))\in\eqclass{}
 }
\end{align}
resulting in the following definition for $\varphi_\AllQnt{}$;
\begin{equation}\label{equ:varphi_AllQnt}
 \varphi_\AllQnt{}=
 \set{\xi_\AllQnt{}\in\Xi_\AllQnt{} \mid 
 [\xi_\AllQnt{}]_{\eqclass{}}\subseteq\varphi_\tau
 }.
\end{equation}


\begin{remark}
	Depending on the control problem at hand, the specification of interest might not be directly given as a set of continuous trajectories $\varphi$.  A common choice is to assign atomic propositions to subsets of the state space $X$ and employ linear temporal logic (LTL) over these propositions to express the specification of interest (as used in the example of \Sec~\ref{sec:Example}). 
	In this case, it is possible to directly use the LTL formula for the abstract controller synthesis by 
	properly shrinking or enlarging the labeled subsets of the state space to account for the abstraction errors (see e.g.,~\cite{Girard2012_ReachabilitySafety} for safety and reachability specifications). This methodology is contained in our setup as the LTL specification along with the respective state subsets can be translated into sets of desired trajectories $\varphi$, $\varphi_\tau$ and $\varphi_\AllQnt{}$ having the relationships captured by \eqref{equ:psi_envelope}-\eqref{equ:varphi_AllQnt}.
	It should be noted that soundness of this particular instance of abstraction-based controller synthesis relies on the implicit assumption that continuous trajectories behave nicely between inter-sampling periods, which is what \Ass~\ref{ass:fbounded} explicitly ensures.
\end{remark}

\subsection{Abstract Controller Synthesis}

Given a finite state metric system, such as $\ContSysA{}{\Sigma{}}$, a controller is a function $\hat{f}^{c}$ which restricts the available inputs in every state of $\ContSysA{}{\Sigma{}}$ s.t. a given property, i.e. $\varphi_\AllQnt{}$, is satisfied.
As any finite state metric system can be equivalently interpreted as a finite automaton, such controllers can be synthesized by well established techniques from reactive synthesis \cite{EmersonJutla91,MPS95}, whenever $\varphi_\AllQnt{}$ is an $\omega$-regular language. 
Such synthesized controllers are known to use \emph{finite} strings of past states visited by $\ContSysA{}{\Sigma{}}$ to reason about currently available inputs. For the ease of presentation, we restrict our attention to such control functions $\hat{f}^{c}$ that base their decisions solely on the currently available state\footnote{Technically, this implies that we restrict our attention to specifications $\varphi_\AllQnt{}$ that can be translated into a finite state automaton over the state space $X_\AllQnt{}$, as for example required in GR(1)-synthesis \cite{GR1_2012}, which is commonly used to synthesize controllers for cyber-physical systems.}, i.e., $\hat{f}^{c}: X_\AllQnt{}\rightarrow \mathcal{U}_{\AllQnt{}}$. 
For such functions it can be readily seen, that the closed-loop composed of the metric system $\ContSysA{}{\Sigma{}}$ and the control function $\hat{f}^{c}$ is given by the finite state metric system\footnote{If the general case for $\hat{f}^{c}$ is considered, $\ContSysAc{}{\Sigma{}}$ is defined over the 
product of the state space $X_\AllQnt{}$ and the memory structure $S\subseteq(X_\AllQnt{})^*$ required to define $\hat{f}^{c}:S\rightarrow\mathcal{U}_{\AllQnt{}}$.} $\ContSysAc{}{\Sigma{}}$ which is equivalent to $\ContSysA{}{\Sigma{}}$ up to the transition function, given by
\begin{equation}\label{equ:deltachat}
 \hat{x}'\in\delta^{c}_{\AllQnt{}}(\hat{x},\mu,\nu) \Leftrightarrow  \hat{x}'\in\delta_{\AllQnt{}}(\hat{x},\mu,\nu) \wedge \mu= \hat{f}^{c}(\hat{x}).
\end{equation}

Given the soundness of reactive controller synthesis, we have the following guarantee on the behavior of $\ContSysAc{}{\Sigma{}}$.

\begin{proposition}\label{prop:controllersound}
 Let $\ContSysA{}{\Sigma{}}$ be the abstract state metric system constructed in \Def~\ref{def:abstract system} and $\varphi_\AllQnt{}\subseteq\Xi_\AllQnt{}$ be a specification. If $\hat{f}^{c}: X_\AllQnt{}\rightarrow \mathcal{U}_{\AllQnt{}}$ is a controller for $\ContSysA{}{\Sigma{}}$ w.r.t.\ $\varphi_\AllQnt{}$ then 
 \begin{equation}\label{equ:Xicinvar}
  \Xi_\AllQnt{}^{c}\subseteq\varphi_\AllQnt{},
 \end{equation}
 where
 \begin{equation*}
  \Xi_\AllQnt{}^{c}=\set{\xi\in\Xi_\AllQnt{} \mid \forall k~.~\exists \mu,\nu ~.~ \xi(k+1)\in\delta^{c}_\AllQnt{}(\xi(k),\mu,\nu)}
 \end{equation*}
 with $\delta^{c}_\AllQnt{}$ as in \eqref{equ:deltachat}.
\end{proposition}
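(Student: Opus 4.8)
The plan is to unfold the definition of the closed-loop system and then invoke the soundness guarantee of reactive synthesis directly. First I would observe that, by the construction in \eqref{equ:deltachat}, every transition $\hat{x}'\in\delta^{c}_{\AllQnt{}}(\hat{x},\mu,\nu)$ forces $\mu=\hat{f}^{c}(\hat{x})$; hence the closed-loop system $\ContSysAc{}{\Sigma{}}$ is exactly $\ContSysA{}{\Sigma{}}$ with the admissible input in each state $\hat{x}$ restricted to the single value $\hat{f}^{c}(\hat{x})$ prescribed by the controller. Consequently the set $\Xi_\AllQnt{}^{c}$ appearing in the statement is precisely the set of all infinite state trajectories realizable in $\ContSysAc{}{\Sigma{}}$.

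Next I would appeal to the meaning of the phrase ``$\hat{f}^{c}$ is a controller for $\ContSysA{}{\Sigma{}}$ w.r.t.\ $\varphi_\AllQnt{}$.'' Since $\ContSysA{}{\Sigma{}}$ is a finite metric system, it can be viewed as a finite automaton and $\hat{f}^{c}$ as a winning strategy obtained from reactive synthesis for the $\omega$-regular objective $\varphi_\AllQnt{}$; by the soundness of such synthesis, every closed-loop run induced by a winning strategy satisfies the objective. In our notation this says exactly that each trajectory realizable in $\ContSysAc{}{\Sigma{}}$ lies in $\varphi_\AllQnt{}$. Combining this with the identification of $\Xi_\AllQnt{}^{c}$ from the first step yields $\Xi_\AllQnt{}^{c}\subseteq\varphi_\AllQnt{}$, which is \eqref{equ:Xicinvar}.

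The only real subtlety --- and the step I would be most careful about --- is the bookkeeping that shows the quantifier pattern $\forall k.\,\exists\mu,\nu.\,\xi(k+1)\in\delta^{c}_{\AllQnt{}}(\xi(k),\mu,\nu)$ defining $\Xi_\AllQnt{}^{c}$ really does coincide with the notion of a closed-loop run under $\hat{f}^{c}$. Because we have restricted attention to state-based controllers $\hat{f}^{c}:X_\AllQnt{}\rightarrow\mathcal{U}_{\AllQnt{}}$, the existentially quantified input $\mu$ at each step is in fact pinned to $\hat{f}^{c}(\xi(k))$, so the definition admits no spurious trajectories and the identification is exact. I expect no genuine mathematical obstacle beyond this: once the closed-loop trajectory sets are seen to match, the inclusion is immediate from the synthesis soundness guarantee cited in \cite{EmersonJutla91,MPS95}.
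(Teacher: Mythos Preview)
Your proposal is correct and matches the paper's treatment. The paper does not give a separate proof of this proposition at all; it is stated immediately after the sentence ``Given the soundness of reactive controller synthesis, we have the following guarantee on the behavior of $\ContSysAc{}{\Sigma{}}$,'' and the inclusion \eqref{equ:Xicinvar} is taken as a direct restatement of what soundness of reactive synthesis (as in \cite{EmersonJutla91,MPS95}) means for the closed loop defined by \eqref{equ:deltachat}. Your unfolding of $\delta^{c}_{\AllQnt{}}$ and the observation that the existential $\mu$ is pinned to $\hat{f}^{c}(\xi(k))$ simply make explicit the bookkeeping the paper leaves implicit.
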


\subsection{Controller Refinement}  
Unfortunately, we cannot simply refine $\hat{f}^{c}:X_\AllQnt{}\rightarrow \mathcal{U}_{\AllQnt{}}$ to a control function $f^{c}:X\rightarrow \mathcal{U}_{\tau}$ to be applied to the time-sampled transition system $\ContSysDT{\Sigma{}}$. This is due to the way disturbance bisimulations are set up. Given a disturbance bisimulation relation $\eqclass{}$ between $\ContSysDT{\Sigma{}}$ and $\ContSysA{}{\Sigma{}}$, every continuous state $x\in X$ might be related to various abstract states $\hat{x}$. Therefore one needs to run the controlled abstract model $\ContSysAc{}{\Sigma{}}$ alongside with $\ContSysDT{\Sigma{}}$ to apply the right control action\footnote{This can be avoided when the disturbance bisimulation relation defined in \Def~\ref{def:DisturbanceBisimulation} would be strengthened to a feedback-refinement relation (see \cite{ReissigWeberRungger_2017_FRR}).} in the current state $x$. 
This is modeled by the following product construction of two metric systems adapted from \cite[\Def 11.9]{TabuadaBook}.

Given $\ContSysAc{}{\Sigma{}}$ as in \eqref{equ:deltachat} and $\ContSysDT{\Sigma{}}$ as in \eqref{def:control-sys-disctime} we define their composition  w.r.t.\ the disturbance bisimulation $\eqclass{}$ in \eqref{eqn:local simulation relation} as  the metric system $\ContSysDT{\Sigma{}}\times_{\eqclass{}} \ContSysAc{}{\Sigma{}} = (X', U_\AllQnt, \mathcal{U}_{\AllQnt{}}, W_\tau\times \widetilde{W}, \mathcal{W}_{\tau}\times\mathcal{W}_\AllQnt{}, \delta'_{\tau})$, s.t.\
	\begin{align}\label{equ:constructTimes}
		&X' = \set{ (x,\hat{x})\in X_\tau\times X_\AllQnt{} \mid (x,\hat{x})\in \eqclass{}}\quad\text{and}\\
		&\delta'_\tau((x,\hat{x}),\mu,(\nu,\hat{\nu}))=\left\{(x',\hat{x}')\in \eqclass{} \mid 
		\begin{matrix}
		\phantom{\wedge}~\hat{x}'\in\delta^{c}_{\AllQnt{}}(\hat{x},\mu,\hat{\nu})\\
		\wedge~x'\in\delta_\tau(\hat{x},\mu,\nu)
		\end{matrix}\right\}.\notag
	\end{align} 

Using this product as our closed loop system, we get the following soundness result.

\begin{theorem}\label{thm:monolithicControllerSynth}
Given the preliminaries of \Thm~\ref{thm:sim-approx-single}, \Ass~\ref{ass:fbounded} and $\ContSysAc{}{\Sigma{}}$ as in \eqref{equ:deltachat}, 
it holds that
\begin{equation}
  \Xi_\tau^{c}\subseteq\varphi_\tau \quad\text{and}\quad \Xi^{c}\subseteq\varphi
 \end{equation}
 where
 \begin{align}
  \Xi_\tau^{c}=&\left\{\xi\in\Xi_\tau\right. \mid \exists \hat{\xi}\in\Xi_\AllQnt{}~.~\forall k~.~\exists \mu,\nu,\hat{\nu} ~.~\label{equ:Xitauc}\\
  &\left.(\xi(k+1),\hat{\xi}(k+1))\in\delta'_\tau((\xi(k),\hat{\xi}(k),\mu,(\nu,\hat{\nu}))\right\}\notag\\
 \Xi^{c}=&\set{\xi\in\Xi \mid \exists \xi_\tau\in\Xi_\tau^{c}~.~\forall k\in\mathbb{N}~.~\xi(k\tau)=\xi_\tau(k)}.\label{equ:XiAllc}
\end{align}
\end{theorem}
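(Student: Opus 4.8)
The plan is to prove the two inclusions separately, reusing the soundness of the abstract synthesis step (\Prop~\ref{prop:controllersound}) together with the two envelope definitions \eqref{equ:varphi_tau} and \eqref{equ:varphi_AllQnt}, and reserving \Ass~\ref{ass:fbounded} for the passage from the sampled to the continuous level. Throughout, \Thm~\ref{thm:sim-approx-single} is invoked only to guarantee that $\eqclass{}$ is a genuine disturbance bisimulation, so that the product state set $X'$ in \eqref{equ:constructTimes} and the envelope $[\hat{\xi}]_{\eqclass{}}$ are consistent with the specification abstraction.

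First I would establish $\Xi_\tau^{c}\subseteq\varphi_\tau$ by a bookkeeping argument chaining the definitions. Fix $\xi\in\Xi_\tau^{c}$ and let $\hat{\xi}\in\Xi_\AllQnt{}$ be the witness from \eqref{equ:Xitauc}. Unfolding the product transition \eqref{equ:constructTimes}, the defining condition of $\Xi_\tau^{c}$ forces, for every $k$, both $\hat{\xi}(k+1)\in\delta^{c}_{\AllQnt{}}(\hat{\xi}(k),\mu,\hat{\nu})$ and $(\xi(k),\hat{\xi}(k))\in\eqclass{}$, since all product states lie in $X'$. The former shows $\hat{\xi}\in\Xi_\AllQnt{}^{c}$, whence $\hat{\xi}\in\varphi_\AllQnt{}$ by \Prop~\ref{prop:controllersound}; the latter shows $\xi\in[\hat{\xi}]_{\eqclass{}}$ by \eqref{equ:eqclass_envelope}. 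Combining these with \eqref{equ:varphi_AllQnt}, which gives $[\hat{\xi}]_{\eqclass{}}\subseteq\varphi_\tau$ precisely when $\hat{\xi}\in\varphi_\AllQnt{}$, yields $\xi\in\varphi_\tau$.

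Next I would lift this to the continuous level, proving $\Xi^{c}\subseteq\varphi$. Take $\xi\in\Xi^{c}$ and let $\xi_\tau\in\Xi_\tau^{c}$ be the witness from \eqref{equ:XiAllc}, so that $\xi(k\tau)=\xi_\tau(k)$ for all $k\in\mathbb{N}$. By the first part $\xi_\tau\in\varphi_\tau$, and so by \eqref{equ:varphi_tau} it suffices to show $\xi\in[\xi_\tau]_\chi$; then $\xi\in[\xi_\tau]_\chi\subseteq\varphi$. This is where \Ass~\ref{ass:fbounded} enters: since $\norm{f(x,u,w)}\leq\chi$ everywhere, every trajectory of $\Sigma$ satisfies $\norm{\StateTrajDot{}(t)}\leq\chi$ for almost all $t$, so integrating between $t$ and the nearest sampling instant $k\tau$, where $k=\lfloor t/\tau+1/2\rfloor$, gives $\norm{\xi(t)-\xi_\tau(k)}\leq\chi\,|t-k\tau|$. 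Because $\xi$ is pinned to $\xi_\tau(k)$ at every sample, on each inter-sampling interval it is bounded from whichever endpoint is nearer, so this matches the tent bound $\chi\,\mathsf{min}(t,\tau-t)$ of \eqref{equ:psi_envelope} (read in the local time of each interval), establishing $\xi\in[\xi_\tau]_\chi$.

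I expect the main obstacle to be this second inclusion, specifically aligning the inter-sampling estimate with the envelope \eqref{equ:psi_envelope} at \emph{all} intermediate times rather than only at sampling instants. One must be careful with the nearest-sample index $k=\lfloor t/\tau+1/2\rfloor$ and verify that the linear growth $\chi\,|t-k\tau|$ stays below $\chi\,\mathsf{min}(t,\tau-t)$ on each half of an interval; this is exactly the point at which boundedness of the vector field controls the drift that the sampled relation $\eqclass{}$ cannot see, whereas the first inclusion reduces to composing three definitions and should present no real difficulty.
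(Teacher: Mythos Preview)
Your proposal is correct and follows essentially the same approach as the paper: both inclusions are proved separately, the first by unfolding the product transition \eqref{equ:constructTimes} to extract $\hat{\xi}\in\Xi_\AllQnt{}^{c}$ (hence $\hat{\xi}\in\varphi_\AllQnt{}$ via \Prop~\ref{prop:controllersound}) and $\xi\in[\hat{\xi}]_{\eqclass{}}$, and the second by combining $\xi_\tau\in\varphi_\tau$ with the bound $\norm{\dot\xi}\leq\chi$ from \Ass~\ref{ass:fbounded} to place $\xi$ in $[\xi_\tau]_\chi$. Your treatment of the inter-sampling estimate is slightly more explicit than the paper's, but the logical structure is identical.
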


\begin{proof}
We prove both claims separately. \\
$\Xi_\tau^{c}\subseteq\varphi_\tau$: Pick any trajectory $\xi_\tau\in\Xi_\tau^{c}$. Then it follows from \eqref{equ:Xitauc} that there exists a trajectory $\xi_\AllQnt{}\in\Xi_\AllQnt{}$ s.t. for all  $k$ there exists $\mu$, $\nu$ and $\hat{\nu}$ s.t. it holds that $(\xi_\tau(k+1),\xi_\AllQnt{}(k+1))\in\delta'_\tau((\xi_\tau(k),\xi_\AllQnt{}(k)),\mu,(\nu,\hat{\nu}))$. Using \eqref{equ:constructTimes} this has two consequences; for all $k$ we have 
\begin{inparaenum}[(i)]
 \item $(\xi_\tau(k+1),\xi_\AllQnt{}(k+1))\in \eqclass{}$, and 
 \item $\xi_\AllQnt{}(k+1)\in\delta^{c}_{\AllQnt{}}(\xi_\AllQnt{i}(k),\mu,\hat{\nu})$.
\end{inparaenum}
Now it can be observed that (i) implies $\xi_\tau\in[\xi_\AllQnt{}]_{\eqclass{}}$ (from \eqref{equ:eqclass_envelope}) and (ii) implies $\xi_\AllQnt{}\in\varphi_\AllQnt{}$ (from \Prop~\ref{prop:controllersound}) and therefore $\xi_\tau\in\varphi_\tau$ (from \eqref{equ:varphi_AllQnt}).\\
$\Xi^{c}\subseteq\varphi$: For the second claim pick $\xi\in\Xi^{c}$ and observe that this implies the existence of $\xi_\tau\in\Xi_\tau^{c}$ \st for all $k$ holds that $\xi(k\tau)=\xi_\tau(k)$ (from \eqref{equ:XiAllc}).
On the one hand, from the first claim $\Xi_\tau^{c}\subseteq\varphi_\tau$ proved above we get $\xi_\tau\in\varphi_\tau$, which then implies $[\xi_\tau]_\chi\subseteq\varphi$ (from \eqref{equ:varphi_tau}).
On the other hand, the rate of changes in $\xi$ is bounded by $\chi$ and $\xi(k\tau)=\xi_\tau(k)$ for all $k$, which result in $\xi(t)$ satisfying the inequality in \eqref{equ:psi_envelope}.
Therefore $\xi\in [\xi_\tau]_\chi$.
These two together imply $\xi\in\varphi$.
\end{proof}

\subsection{Compositional Soundness}
Analogously to the generalization of \Thm~\ref{thm:sim-approx-single} to networks of metric systems in \Cor~\ref{cor:sim-appx}, we can generalize \Thm~\ref{thm:monolithicControllerSynth} to the following corollary derived from  \Cor~\ref{cor:sim-appx}.

\begin{corollary}
 Given the preliminaries of \Cor~\ref{cor:sim-appx}, let $\Xi_i$, $\Xi_{\tau,i}$ and $\Xi_{\AllQnt{i}}$ be the sets of trajectories of $\Sigma_i$, $\ContSysDT{\Sigma_{i}}$ and $\ContSysAc{i}{\Sigma_{i}}$, respectively. Furthermore, let $\varphi_i\subseteq\Xi_i$ be given for all $i\in\mathcal{N}$, and $\varphi_{\tau,i}\subseteq\Xi_{\tau,i}$ and $\varphi_{\AllQnt{i}}\subseteq\Xi_{\AllQnt{i}}$ be the abstract specifications of $\varphi_i$ as constructed in \eqref{equ:varphi_tau} and \eqref{equ:varphi_AllQnt}, respectively. Finally, let $\ContSysAc{i}{\Sigma{}}$ be the controlled abstract system fulfilling \eqref{equ:Xicinvar} and let \Ass~\ref{ass:fbounded} hold for every $i\in\mathcal{N}$. Then we can define $\Xi_{\tau,i}^{c}$ and $\Xi^{c}_i$ analogously to \eqref{equ:Xitauc} and \eqref{equ:XiAllc} and it holds that
 \begin{equation}
  \Xi_{\tau,i}^{c}\subseteq\varphi_{\tau,i} \quad\text{and}\quad \Xi^{c}_i\subseteq\varphi_i.
 \end{equation}
\end{corollary}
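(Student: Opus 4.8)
The plan is to reduce this compositional soundness statement to a \emph{component-wise} application of the monolithic soundness result \Thm~\ref{thm:monolithicControllerSynth}. The essential observation is that \Cor~\ref{cor:sim-appx} already delivers, for every $i\in I$ simultaneously, exactly the ingredient that \Thm~\ref{thm:monolithicControllerSynth} consumes as a hypothesis: namely that $\eqclass{i}$ is a disturbance bisimulation with parameters $(\varepsilon_i,\tilde{\varepsilon}_i)$ between $\ContSysDT{\Sigma_i}$ and $\ContSysAi{\Sigma_i}$. Since this is precisely the conclusion of \Thm~\ref{thm:sim-approx-single} instantiated at $\Sigma_i$, the full premise set of \Thm~\ref{thm:monolithicControllerSynth} is available for each fixed component treated in isolation.

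Concretely, I would first fix an arbitrary $i\in I$ and record that, by the preliminaries inherited from \Cor~\ref{cor:sim-appx}, the component $\Sigma_i$ meets all hypotheses of \Thm~\ref{thm:sim-approx-single}, while \Ass~\ref{ass:fbounded} is assumed to hold for $\Sigma_i$ as well; hence \Thm~\ref{thm:monolithicControllerSynth} applies verbatim with $\Sigma$ replaced by $\Sigma_i$. Next I would note that the specification abstractions $\varphi_{\tau,i}$ and $\varphi_{\AllQnt{i}}$ are obtained from $\varphi_i$ through the very constructions \eqref{equ:varphi_tau} and \eqref{equ:varphi_AllQnt} used in the single-system case, and that $\Xi_{\tau,i}^{c}$ and $\Xi_i^{c}$ are, by assumption, defined as the component-wise analogues of \eqref{equ:Xitauc} and \eqref{equ:XiAllc}. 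Consequently the two inclusions $\Xi_{\tau,i}^{c}\subseteq\varphi_{\tau,i}$ and $\Xi_i^{c}\subseteq\varphi_i$ drop out directly from \Thm~\ref{thm:monolithicControllerSynth}, and since $i$ was arbitrary they hold for every $i\in I$.

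The point that requires genuine care --- and the reason this is more than a trivial re-indexing --- is that the disturbance signals feeding $\ContSysDT{\Sigma_i}$ and $\ContSysAi{\Sigma_i}$ are not chosen freely but are produced by the (continuous, respectively piecewise-constant) state trajectories of the neighbouring components. The main obstacle is therefore to confirm that these externally supplied disturbances always respect the mismatch bound $\tilde{\varepsilon}_i$ demanded in \Def~\ref{def:DisturbanceBisimulation}(b)--(c), so that the disturbance quantification in the monolithic argument stays valid. This is exactly what the coupling condition \eqref{equ:tildevarepsilon 1} enforced by \Cor~\ref{cor:sim-appx} guarantees: because every neighbour $j\in\nbr_{\mathcal{I}}(i)$ is related to its abstraction at precision $\varepsilon_j$, the concrete and abstract disturbance signals of $\Sigma_i$ differ at each sampling instant by at most $\tilde{\varepsilon}_i=\prod_{j\in\nbr_{\mathcal{I}}(i)}\set{\varepsilon_j}$, which is precisely the tolerance built into $\eqclass{i}$. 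Under this guarantee the product closed loop $\ContSysDT{\Sigma_i}$ with $\ContSysAc{i}{\Sigma_i}$ is well defined and the monolithic proof transfers without modification; no further reasoning about inter-component interaction is needed here, the composition of the relations themselves having been established separately in \Thm~\ref{thm:composition of ADB MS}.
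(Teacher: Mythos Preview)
Your proposal is correct and matches the paper's approach: the paper does not give an explicit proof but states that the corollary is obtained by generalizing \Thm~\ref{thm:monolithicControllerSynth} via \Cor~\ref{cor:sim-appx} in the same way that \Thm~\ref{thm:sim-approx-single} was generalized to \Cor~\ref{cor:sim-appx}, which is precisely your component-wise application of \Thm~\ref{thm:monolithicControllerSynth} using the per-component disturbance bisimulations $\eqclass{i}$ supplied by \Cor~\ref{cor:sim-appx}. Your third paragraph on the neighbour-induced disturbance mismatch is sound additional commentary, though not strictly needed for the inclusions as stated, since the definitions of $\Xi_{\tau,i}^{c}$ and $\Xi_i^{c}$ already constrain successor pairs to lie in $\eqclass{i}$.
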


\section{An example}\label{sec:Example}
Consider the following interconnected linear time invariant systems
\begin{align*}
	&\Sigma_{1,i}: \ \begin{bmatrix}
						\dot{x}_{1,i}\\
						\dot{x}_{2,i}
					\end{bmatrix} = \begin{bmatrix}
										-1 & 1\\
										-1 & -1
									\end{bmatrix}\begin{bmatrix}
										x_{1,i}\\
										x_{2,i}
									\end{bmatrix}
									+
									\begin{bmatrix}
										0 & 0 & 0\\
										1 & 0.1 & 0.1
									\end{bmatrix}\begin{bmatrix}
										u_1\\
										x_{2,i-1}\\
										x_{4,i}
									\end{bmatrix}\\
	&\Sigma_{2,i}: \ \begin{bmatrix}
						\dot{x}_{3,i}\\
						\dot{x}_{4,i}
					\end{bmatrix} = \begin{bmatrix}
										-1 & 1\\
										-1 & -1
									\end{bmatrix}\begin{bmatrix}
										x_{3,i}\\
										x_{4,i}
									\end{bmatrix}
									+
									\begin{bmatrix}
										0 & 0\\
										1 & 0.3
									\end{bmatrix}\begin{bmatrix}
										u_2\\
										x_{2,i}
									\end{bmatrix}
\end{align*}
where $i\in [1,N]$ and $x_{2,i-1}=0$ for $i=1$. The system is made up of $N$ identical smaller networks in cascade, where each smaller network consists of a pair of two linear time-invariant control systems connected in feedback. One such instance of the network consisting of $N=3$ such pairs is depicted in \Fig~\eqref{fig:ex1}.
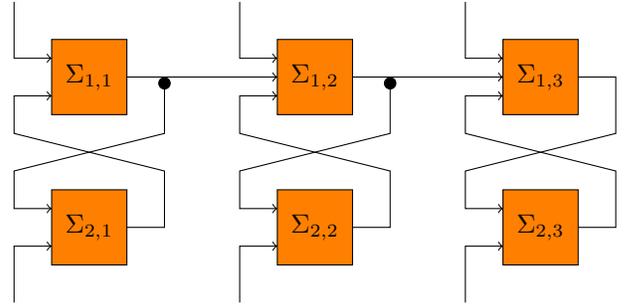
\begin{figure}[h]
	\centering
	\begin{tikzpicture}
		\draw[fill=orange]	(0,0)	rectangle	(1,1)	node[pos=0.5]		{$\Sigma_{2,1}$};
		\draw[fill=orange]	(0,2)	rectangle	(1,3)	node[pos=0.5]		{$\Sigma_{1,1}$};
		\draw[fill=orange]	(3,0)	rectangle	(4,1)	node[pos=0.5]		{$\Sigma_{2,2}$};
		\draw[fill=orange]	(3,2)	rectangle	(4,3)	node[pos=0.5]		{$\Sigma_{1,2}$};
		\draw[fill=orange]	(6,0)	rectangle	(7,1)	node[pos=0.5]		{$\Sigma_{2,3}$};
		\draw[fill=orange]	(6,2)	rectangle	(7,3)	node[pos=0.5]		{$\Sigma_{1,3}$};
		
		\draw[->]			(1,2.5)		--		(3,2.5);
		\draw[->]			(4,2.5)		--		(6,2.5);
		
		\draw[->]	(1,0.5)	--	(1.5,0.5)	--	(1.5,1.25)	--	(-0.5,1.75)	--	(-0.5,2.25)	--	(0,2.25);
		\draw[->]	(4,0.5)	--	(4.5,0.5)	--	(4.5,1.25)	--	(2.5,1.75)	--	(2.5,2.25)	--	(3,2.25);
		\draw[->]	(7,0.5)	--	(7.5,0.5)	--	(7.5,1.25)	--	(5.5,1.75)	--	(5.5,2.25)	--	(6,2.25);
		
		\draw[*->]	(1.5,2.5)	--	(1.5,1.75)	--	(-0.5,1.25)	--	(-0.5,0.75)	--	(0,0.75);
		\draw[*->]	(4.5,2.5)	--	(4.5,1.75)	--	(2.5,1.25)	--	(2.5,0.75)	--	(3,0.75);
		\draw[->]	(7,2.5)	--	(7.5,2.5)	--	(7.5,1.75)	--	(5.5,1.25)	--	(5.5,0.75)	--	(6,0.75);
		
		\draw[->]	(-0.5,-0.5)	--	(-0.5,0.25)	--	(0,0.25);
		\draw[->]	(2.5,-0.5)	--	(2.5,0.25)	--	(3,0.25);
		\draw[->]	(5.5,-0.5)	--	(5.5,0.25)	--	(6,0.25);
		
		\draw[->]	(-0.5,3.5)	--	(-0.5,2.75)	--	(0,2.75);
		\draw[->]	(2.5,3.5)	--	(2.5,2.75)	--	(3,2.75);
		\draw[->]	(5.5,3.5)	--	(5.5,2.75)	--	(6,2.75);
	\end{tikzpicture}
	\caption{Fig: Network of control systems for $N=3$.}
	\label{fig:ex1}
\end{figure}

Let the compact state and input spaces of $\Sigma_{1,i}$ and $\Sigma_{2,i}$ considered for the construction of abstract metric systems be given by $X'_{1,i} = [-3.2,3.2]\times [-3.2,3.2]$, $X'_{2,i} = [-4.2,4.2]\times [-4.2,4.2]$, $U'_{1,i}=[-5,5]$ and $U'_{2,i}=[-7,7]$ respectively. Each of the systems in the network has reachability and safety specifications given in LTL:
\begin{align}\label{equ:exampleLTL}
	\varphi_{1,i}= \lozenge R_{1,i}\wedge\square B_{1,i} \qquad \qquad	\varphi_{2,i}= \lozenge R_{2,i} \wedge \square B_{2,i}
\end{align}
where $\lozenge$ means ``eventually'' (reachability) and $\square$ means ``always'' (safety), and $R_{1,i}$, $R_{2,i}$ are ellipsoidal sets of target states: 
$R_{1,i}=\set{[x_{1,i} \ x_{2,i}]^T\in X'_{1,i}\mid (x_{1,i}-1.5)^2+x_{2,i}^2 \leq 0.94}$, $R_{2,i}=\set{[x_{3,i} \ x_{4,i}]^T\in X'_{2,i}\mid (x_{3,i}+1.5)^2+x_{4,i}^2 \leq 0.7}$, and $B_1$, $B_2$ are sets of safe states: $B_1 = X'_{1,i}\setminus [-1,0.5]\times[-1.5,1.5]$ and $B_2 = X'_{2,i}\setminus [-1,0.5]\times[-1.5,1.5]$ (i.e. the rectangle $[-1,0.5]\times[-1.5,1.5]$ is an obstacle for both $\varphi_{1,i}$ and $\varphi_{2,i}$).

For this given set of systems and specifications, we wish to synthesize decentralized controllers s.t. $\varphi_{1,i}$ and $\varphi_{2,i}$ are satisfied by each individual $i$-th closed loop. Actually, by taking advantage of the similarity of the specifications and the dynamics, we just need to synthesize two closed loops and deploy identical copies of them in each subsystem. 

As the prerequisite of controller synthesis, we first point out that both $\Sigma_{1,i}$ and $\Sigma_{2,i}$ admit $\dISS$ Lyapunov functions that are presented together with their associated parameters in Table~\ref{table:lyapunov}. 

\begin{table}[h]
	\centering
		\caption{$\dISS$ Lyapunov functions and the corresponding parameters.}
	\label{table:lyapunov}
	\begin{tabular}{l | c c}
		& $\Sigma_{1,i}$	& $\Sigma_{2,i}$\\
		\hline\\
		$V$	&	$5x_{1,i}^2+5x_{2,i}^2$	&	$5x_{3,i}^2+5x_{4,i}^2$\\[0.2cm]
		$\alphalow{}$ &	$2.2361$	&	$2.2361$\\
		$\alphahigh{}$	&	$2.2361$	&	$2.2361$\\
		$\gamma$	&	$2.2361$	&	$2.2361$\\
		$\lambda$	&	$1$	&	$1$\\
		$\sigmau{}$	&	$2.2361$	&	$2.2361$\\
		$\sigmad{}$	&	$0.3162$	&	$0.6708$\\[0.2cm]
		$\psi$		&	$4.7405$	&	$3.3541$
	\end{tabular}
\end{table}

Given this setup we discuss two different cases.

\paragraph{N=3}
 Using the parameters given in Table~\ref{table:lyapunov} it can be verified that for $A$, $B$ defined as in \Thm~\ref{thm:cond:sim-appx}, $\lambda_{max}(A^{-1}B) = 0.4606 < 1$. Then by Remark~\ref{rem:small gain}, we have that \eqref{equ:small gain kind} holds for $N=3$.
Now we fix the abstraction parameters as follows: $\tau = 0.1$, $\omega_1=\omega_2=0.1$ and $\varepsilon_1 = \varepsilon_2 = 0.7$. Using this set of parameters and the Lyapunov functions in Table~\ref{table:lyapunov}, \eqref{cor:sim-appx} evaluates to $0<\eta_{1,i}<0.0236$ and $0<\eta_{2,i}<0.0228$. Then by \Cor~\ref{cor:sim-appx}, we have that the finite state abstractions $\ContSysAi{\Sigma_{j,i}}$ are disturbance bisimilar with parameters $(\varepsilon_j,\tilde{\varepsilon}_j)$ to the sampled time systems $\ContSysDT{\Sigma_{j,i}}$ for $j\in \set{1,2}$.

\begin{figure*}[t]
	\centering
	\includegraphics[scale=1]{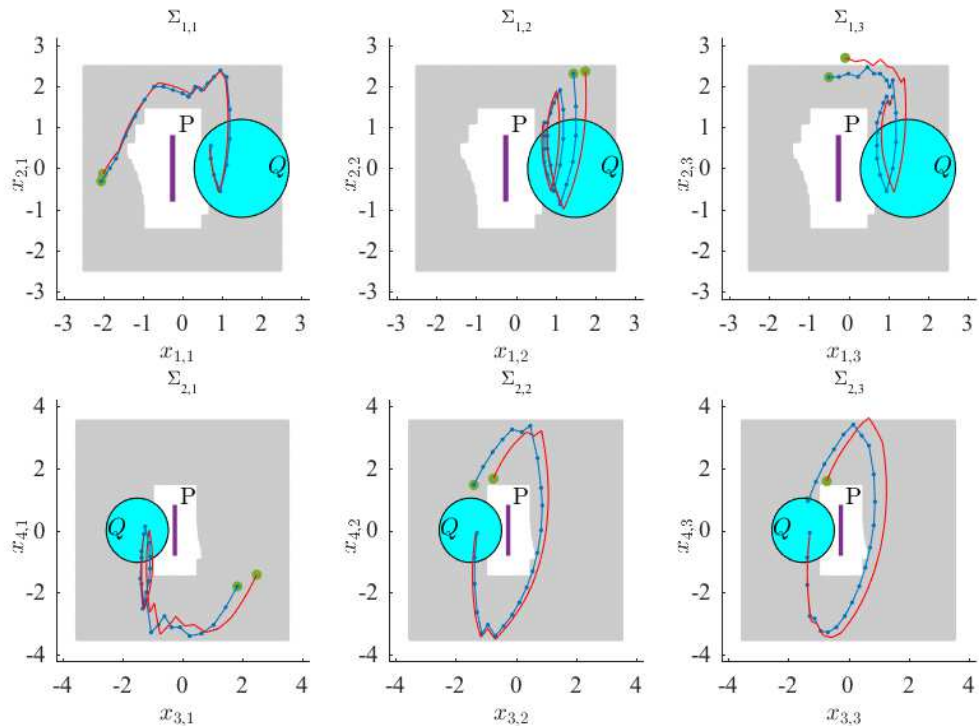}
	\caption{Simultaneous evolution of the state trajectories in the network of closed loop systems for $N=3$, with arbitrarily chosen initial states within the domain of the controllers. For each subplot, the gray region is the domain of the abstract controller, the purple rectangle ($P$) in the middle is the obstacle, and the cyan circle ($Q$) is the target of the reachability objective. The red and blue lines are the continuous and abstract trajectories respectively, which start from the green dots.}
	\label{fig:closed loop 3sys}
\end{figure*}

In this example, because of the similarity of the subsystems and their specifications, we only need to solve two synthesis problems; we synthesize two control functions $\hat{f}^c_1$ and $\hat{f}^c_2$ for $\ContSysAi{\Sigma_{1,i}}$ and $\ContSysAi{\Sigma_{2,i}}$ w.r.t.\ the specifications $\varphi_{\AllQnt{i},1,i}$ and $\varphi_{\AllQnt{i},2,i}$, respectively, for some $i\in [1,N]$. In this particular case, $\varphi_{\AllQnt{},j,i}$ is the LTL specification in \eqref{equ:exampleLTL} over the $\varepsilon_i$-deflation of the target and save sets. 
Refining these abstract controllers as discussed in \Sec~\ref{sec:control} results in a network of closed loop systems whose simultaneously generated trajectories are depicted in \Fig~\ref{fig:closed loop 3sys}. The simulation was stopped after  each of the systems has fulfilled its reachability objective at least once. \Fig~\ref{fig:closed loop 3sys} shows that all local closed loops robustly and independently satisfy their objectives.

\paragraph{N=100}
Now we increase the size of the system to $N=100$, with a total number of $400$ state variables. To our best knowledge, no existing tool for monolithic synthesis scales to such a large system. However, our method scales perfectly as controller synthesis only needs to be performed for systems with two state variables as discussed before. The resultant continuous trajectories of the network of closed loop systems are depicted in \Fig~\ref{fig:closed loop allinone}. For clarity of presentation every trajectory was stopped when it first met its reachability objective. It is observed that each of the subsystems fulfills it's specification.

We want to point out that $N$ could have been increased to any arbitrarily large value without affecting the sound behavior of the local controllers for each subsystem. The reason is that the abstraction error of each subsystem in the network is immune to the abstraction error of non-neighboring subsystems. This is easy to verify from \Inq~\eqref{eq:condition on StateQnt i}, where we use only the \emph{upper bounds} (i.e. the most pessimistic bounds) on the abstraction errors of the neighbors in $\tilde{\varepsilon}_i$. Since the abstraction error of each subsystem does not depend on the non-neighboring subsystems, and moreover the number of neighbors of all but one subsystems in the network remain the same when we increase $N$, no matter what value $N$ might take soundness is guaranteed.
\begin{figure}
	\centering
	\includegraphics[scale=1]{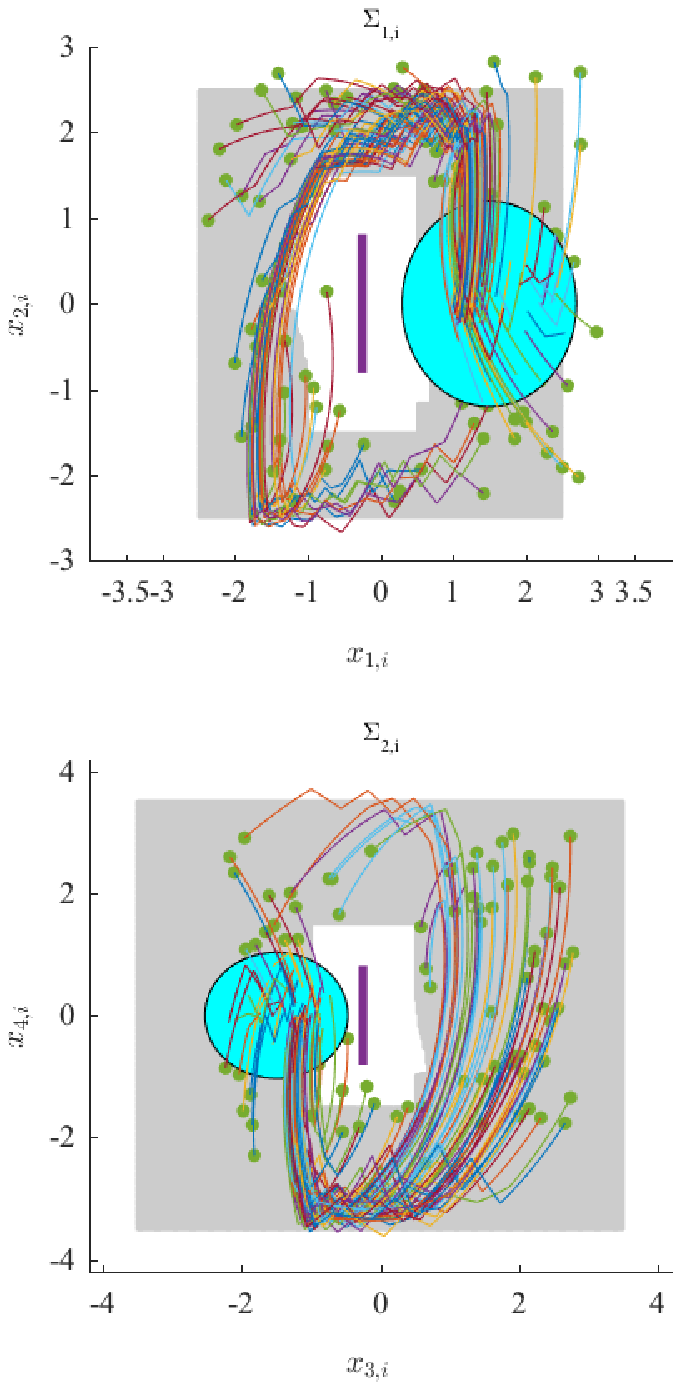}
	\caption{Simultaneous evolution of the systems for $N=100$ with arbitrary initial points within the domain of the controllers. For each subplot, the state space has the same representation same as in \Fig~\ref{fig:closed loop 3sys} (annotations are omitted). The lines represent the continuous trajectories of various systems which start from the green dots.}
	\label{fig:closed loop allinone}
\end{figure}

\section{Conclusion}

In this paper we introduced disturbance bisimulation as an equivalence relation between two metric systems having the same metric on their state spaces, and showed that disturbance bisimulation is closed under system composition. We extended disturbance bisimulation to two different abstractions of nonlinear dynamic systems by suitably abstracting the time, input-space and state-space. Finally we show how exploiting the closure under composition property, one can use disturbance bisimilar abstractions for decentralized controller synthesis with omega-regular control objectives. We demonstrate the effectiveness of our theory by an example.




\bibliographystyle{IEEEtran}
\bibliography{reportbib}

\end{document}